\def\VR{\kern-\arraycolsep\strut\vrule &\kern-\arraycolsep}
\def\vr{\kern-\arraycolsep & \kern-\arraycolsep}
\newtheorem{theorem}{Theorem}
\newtheorem{lemma}[theorem]{Lemma}
\newtheorem{corollary}[theorem]{Corollary}
\theoremstyle{definition}
\newtheorem{definition}[theorem]{Definition}
\newtheorem{rmk}[theorem]{Remark}
\newenvironment{remark}[1][]{\begin{rmk}[#1]\pushQED{\qed}}{\popQED \end{rmk}}
\newtheorem{qu}[theorem]{Question}
\newtheorem*{rmknonum}{Remark}
\newtheorem{obs}[theorem]{Observation}
\newtheorem{ex}[theorem]{Example}
\newcommand{\Hom}{\operatorname{Hom}}
\newcommand{\End}{\operatorname{End}}
\newcommand{\tr}{\operatorname{Tr}}
\newcommand{\rep}{\operatorname{rep}}
\newcommand{\mm}{\operatorname{M}}
\newcommand{\GL}{\operatorname{GL}}
\newcommand{\ZZ}{\mathbb Z}
\newcommand{\E}{\mathbb E}
\newcommand{\RR}{\mathbb R}
\newcommand{\NN}{\mathbb N}
\newcommand{\tup}{\mathbf p}
\newcommand{\0}{\mathbf 0}
\newcommand{\B}{\mathbf B}
\newcommand{\C}{\mathbf C}
\newcommand{\V}{V}
\newcommand{\Id}{\mathbf{I}}
\newcommand{\Mat}{\operatorname{Mat}}
\newcommand{\Stab}{\operatorname{Stab}}
\newcommand{\ddim}{\operatorname{\mathbf{dim}}}
\newcommand{\cc}{\operatorname{\mathbf{c}}}
\newcommand{\dd}{\operatorname{\mathbf{d}}}
\newcommand{\nn}{\mathbf{n}}
\newcommand{\A}{\mathbf A}
\newcommand{\G}{\mathcal{G}}
\newcommand{\Q}{\mathcal{Q}}
\newcommand{\p}{\mathcal{P}}
\newcommand{\ar}{\mathcal{A}}
\newcommand{\s}{\mathcal{S}}
\newcommand{\jj}{\mathcal{I}^{-}_j}
\newcommand{\ii}{\mathcal{I}^{+}_i}
\newcommand{\capa}{\mathbf{cap}}
\newcommand{\Det}{\mathsf{Det}}
\newcommand\restr[2]{{
  \left.\kern-\nulldelimiterspace 
  #1 
  \vphantom{\big|} 
  \right|_{#2} 
  }}
\begin{document}
\title{The capacity of quiver representations and the Anantharam-Jog-Nair inequality}
\author{Calin Chindris and Harm Derksen}
\address{University of Missouri-Columbia, Mathematics Department, Columbia, MO, USA}
\email[Calin Chindris]{chindrisc@missouri.edu}

\address{Northeastern University, Boston, MA}
\email[Harm Derksen]{ha.derksen@northeastern.edu}

\date{\today}
\bibliographystyle{amsalpha}
\subjclass[2010]{16G20, 13A50, 14L24, 94A15}
\keywords{Anantharam-Jog-Nair constants/inequalities, capacity of quiver data, (semi-)stable quiver representations, geometric quiver data}

\begin{abstract} The Anantharam-Jog-Nair inequality \cite{AJN-2022} in Information Theory provides a unifying approach to the information-theoretic form of the Brascamp-Lieb inequality \cite{CarCor-2009} and the Entropy Power inequality \cite{ZamFed-1993, Rio-2011}. 

In this paper, we use methods from Quiver Invariant Theory \cite{ChiDer-2021} to study Anantharam-Jog-Nair inequalities with integral exponents. For such an inequality, we first view its input datum as a quiver datum and show that the best constant that occurs in the Anantharam-Jog-Nair inequality is of the form $-{1\over 2}\log (\capa(V,\sigma))$ where $\capa(V, \sigma)$ is the capacity of a quiver datum $(V, \sigma)$ of a complete bipartite quiver.

The general tools developed in \cite{ChiDer-2021}, when applied to complete bipartite quivers, yield necessary and sufficient conditions for: $(1)$ the finiteness of the Anantharam-Jog-Nair best constants; and $(2)$ the existence of Gaussian extremizers. These results recover some of the main results in \cite{AJN-2022} and \cite{AraCouZha-2022}.  In addition, we characterize gaussian-extremizable data in terms of semi-simple data, and provide a general character formula for the Anatharam-Jog-Nair constants.  Furthermore, our quiver invariant theoretic methods lead to necessary and sufficient conditions for the uniqueness of Gaussian extremizers. This answers the third and last question left unanswered in \cite{AJN-2022}.
\end{abstract}

\maketitle
\setcounter{tocdepth}{1}
\tableofcontents

\section{Introduction} 
\subsection{Motivation} Let $k$ and $m$ be positive integers and let $\dd=(d_1, \ldots, d_k)$ and $\nn=(n_1, \ldots, n_m)$ be two tuples of positive integers with $D:=\sum_{i=1}^k d_i$.

Let $\A:=(A_{ij}: \RR^{d_i} \to \RR^{n_j})_{i \in [k], j\in [m]}$ be a tuple of linear maps such that the maps $A_j:=[A_{1j} \ldots A_{kj}]: \RR^D \to \RR^{n_j}$ are surjective for all $j \in [m]$. 

Let $\cc=(c_1, \ldots c_k)$ and $\tup=(p_1, \ldots, p_m)$ be tuples of positive integers such that 
$$
\sum_{i=1}^k c_i d_i=\sum_{j=1}^m p_j n_j.
$$

Let us define $\p(\dd)$ to be the set of all random vectors in $\RR^D$ that can be partitioned into $k$ components $X=(X_1, \ldots, X_k)$ such that
\begin{enumerate}[(a)]
\item $X_i$ is random vector in $\RR^{d_i}$ with finite entropy and second moment for every $i \in [k]$;
\smallskip

\item $X_1, \ldots, X_k$ are mutually independent;
\smallskip

\item $\E(X)=0$ and $\E||X||^2_2<\infty$.
\end{enumerate}

We also define $\p_g(\dd) \subseteq \p(\dd)$ to be the set of all $X=(X_1, \ldots, X_k) \in \p(\dd)$ such that each $X_i$ is Gaussian. In \cite[Theorem 3]{AJN-2022}, Anantharam, Jog, and Nair have recently proved that the best (\emph{i.e.}, the smallest) constant $C$ for which the entropic inequality 
\begin{equation}\label{AJN-ineq}
\sum_{i=1}^k c_i h(X_i)\leq \sum_{j=1}^m p_j h(A_jX)+C
\end{equation}
holds for all  $X\in \p(\dd)$ can be computed by considering only independent Gaussian variables $(X_1, \ldots, X_k) \in \p_g(\dd)$. Thus the best constant in $(\ref{AJN-ineq})$, denoted by $\mm(\A, \cc, \tup)$, can be expressed as
\begin{equation}\label{AJN-const}
\mm(\A, \cc, \tup)=\sup_{Z \in \p_g(\dd)} \left( \sum_{i=1}^k r_i h(Z_i)-\sum_{j=1}^m p_j h(A_jZ) \right).
\end{equation}
\noindent
We call $(\A, \cc, \tup)$ an AJN-datum and $\cc$ and $\tup$ integral exponents. We also refer to $(\ref{AJN-ineq})$ as an \emph{AJN-inequality} and call $(\ref{AJN-const})$ an \emph{AJN-constant}. In \cite{AJN-2022} (see also \cite{Cou-2019}), the authors have settled the question of when $\mm(\A, \cc, \tup)$ is finite.  Another important question posed in \cite{AJN-2022} asks to determine when extermizers and Gaussian extremizers exist, and whether extremizability is equivalent to Gaussian extremizability. These questions have been answered in \cite{AraCouZha-2022}. Our goal in this paper is to explain how these results, except for \emph{extremizability is the same as Gaussian extremizability}, can be deduced from the quiver invariant theoretic methods developed in \cite{ChiDer-2021}. Moreover, we will see that the general results from \cite{ChiDer-2021} applied to the set-up of the ANJ inequality $(\ref{AJN-ineq})$ yield necessary and sufficient conditions for the uniqueness of Gaussian extremizers. This solves the last question in \cite[Question 3, page 7]{AJN-2022}.

\subsection{Our results} The tuple of linear maps $\A:=(A_{ij}: \RR^{d_i} \to \RR^{n_j})_{i \in [k], j\in [m]}$ and the exponent $\cc$ that appear in $(\ref{AJN-ineq})$ and $(\ref{AJN-const})$ can be viewed as an arrangement of vectors spaces and linear maps attached to the vertices and arrows of the complete bipartite directed graph $\Q_{k,m}$ as follows

$$\Q_{n,m}:~
\vcenter{\hbox{  
\begin{tikzpicture}[point/.style={shape=circle, fill=black, scale=.3pt,outer sep=3pt},>=latex]
   \node[point,label={left:$\underline{1}$}] (1) at (-4,3) {};
   \node[point,label={left:$\underline{i}$}] (2) at (-4,.1) {};
   \node[point,label={left:$\underline{k}$}] (3) at (-4,-3) {};
   
   \node[point,label={right:$1$}] (-1) at (0,3) {};
   \node[point,label={right:$j$}] (-2) at (0,.1) {};
   \node[point,label={right:$m$}] (-3) at (0,-3) {};
  
   \draw[dotted] (0,1.5)--(0,1.25);
   \draw[dotted] (-4,1.5)--(-4,1.25);
   \draw[dotted] (0,-1.25)--(0,-1.5);
   \draw[dotted] (-4,-1.25)--(-4,-1.5);
  
   \path[->]
  (1) edge [bend left=25]  (-1)
  (1) edge  [bend left=10] (-2)
  (1) edge  [bend right=10] (-3)
  (2) edge [bend left=5]  (-1)
  (2) edge [bend right=20](-2)
  (2) edge [bend right=10]  (-3)
  (3) edge [bend left=18]  (-1)
  (3) edge [bend left=5](-2)
  (3) edge  [bend right=25] (-3);
\end{tikzpicture} 
}}
\hspace{30pt}
V_{\A, \cc}:~
\vcenter{\hbox{
\begin{tikzpicture}[point/.style={shape=circle, fill=black, scale=.3pt,outer sep=3pt},>=latex]
   \node[point,label={left:$\RR^{d_1}$}] (1) at (-4,3) {};
   \node[point,label={left:$\RR^{d_i}$}] (2) at (-4,.1) {};
   \node[point,label={left:$\RR^{d_k}$}] (3) at (-4,-3) {};
   \node[point,label={right:$\RR^{n_1}$}] (-1) at (0,3) {};
   \node[point,label={right:$\RR^{n_j}$}] (-2) at (0,.1) {};
   \node[point,label={right:$\RR^{n_m}$}] (-3) at (0,-3) {};
   
   \draw[dotted] (0,1.5)--(0,1.25);
   \draw[dotted] (-4,1.5)--(-4,1.25);
   \draw[dotted] (0,-1.25)--(0,-1.5);
   \draw[dotted] (-4,-1.25)--(-4,-1.5);
  
   \path[->]
   (1) edge [bend left=25] node[above] {${1 \over \sqrt{c_1}}A_{1,1}$} (-1)
   (1) edge [bend left=10] node[very near end, above] {$\phantom{abcd} {1 \over \sqrt{c_1}}A_{1,j}$} (-2)
   (1) edge [bend right=10] node[very near end, above] {$\phantom{abcde} {1 \over \sqrt{c_1}}A_{1,k}$} (-3)
   (2) edge [bend left=5] node[near end, above] {${1 \over \sqrt{c_i}}A_{i,1}\phantom{abc}$} (-1)
   (2) edge [bend right=20] node[near end, above] {${1 \over \sqrt{c_i}}A_{i,j}$} (-2)
   (2) edge [bend right=10] node[near end, below] {${1 \over \sqrt{c_i}}A_{i, k}$} (-3)
   (3) edge [bend left=18] node[very near end, below] {$\phantom{abcd} {1 \over \sqrt{c_k}}A_{k,1}$} (-1)
   (3) edge [bend left=5] node[very near end, below] {$\phantom{ab} {1 \over \sqrt{c_k}}A_{k,j}$} (-2)
   (3) edge [bend right=25] node[below] {${1 \over \sqrt{c_k}}A_{k,m}$} (-3);  
\end{tikzpicture} 
}}
$$
Furthermore, the tuples $\cc$ and $\tup$ define the integral stability weight $\sigma_{\cc, \tup}$ of $\Q_{k,m}$ by assigning the integers $c_i$ and $-p_j$ to the vertices of $\Q_{k,m}$. 

In what follows, we briefly recall just enough terminology to state our main results, with more detailed background found in Section \ref{BL-operators-sec}. Let $\Q_0$ denote the set of vertices and $\Q_1$ denote the set of arrows of $\Q:=\Q_{k,m}$. A real \emph{representation} $\V$ of $\Q$ assigns a finite-dimensional real vector space $\V(x)$ to every vertex $x \in \Q_0$ and a linear map $\V(a): \V(ta) \to \V(ha)$ to every arrow $a \in \Q_1$. After fixing bases for the vector spaces $\V(x)$, $x \in \Q_0$, we often think of the linear maps $\V(a)$, $a \in \Q_1$, as matrices of appropriate size. The dimension vector of a representation $\V$ of $\Q$ is $\ddim \V:=(\dim_{\RR} V(x))_{x \in \Q_0} \in \NN^{\Q_0}$. 

A \emph{morphism} $\varphi:V \rightarrow W$ between two representations is a collection $(\varphi(x))_{x \in Q_0}$ of $\RR$-linear maps with $\varphi(x) \in \Hom_\RR(V(x), W(x))$ for each $x \in Q_0$, and such that $\varphi(ha) \circ V(a)=V(a) \circ \varphi(ta)$ for each $a \in Q_1$. For a representation $V$, the one-dimensional vector space $\{(\lambda \Id_{V(x)})_{x \in Q_0} \mid \lambda \in \RR\}$ is always a subspace of the space of endomorphisms, $\End_Q(V)$, of $V$. We say that $V$ is a \emph{Schur} representation if $\End_Q(V)$ is one-dimensional. 

The \emph{dimension vector} of an \emph{AJN-datum} $(\A,\cc, \tup)$ is defined to be $(\dd, \nn)$. We say that $(\A, \cc, \tup)$ is a \emph{Schur datum} if the corresponding representation $V_{\A, \cc}$ is a Schur representation.

Let $\sigma \in \ZZ^{\Q_0}$ be an integral weight of $\Q$ such that $\sigma$ is positive at the source verices and negative at the sink verices.  A representation $V$ of $\Q$ is said to be \emph{$\sigma$-semi-stable} if $\sigma \cdot \ddim V:=\sum_{x \in Q_0} \sigma(x)\dim V(x)=0$ and 
\begin{equation}\label{semi-stab-eqn}
\sum_{i=1}^k \sigma(\underline{i}) \dim_{\RR}V'(\underline{i})\leq \sum_{j=1}^m (-\sigma(j)) \dim_{\RR} \left( \sum_{i=1}^k V_{ij}(V'(\underline{i}))\right),
\end{equation}
for every collection of subspaces $\left(V'(\underline{i}) \subseteq V(\underline{i})\right)_{i \in [k]}$. (Here, $V_{ij}$ denotes the linear map $V(a)$ where $a$ is the unique arrow from the source vertex $\underline{i}$ to the sink vertex $j$.) We say that $V$ is \emph{$\sigma$-stable} if $\sigma \cdot \ddim V=0$ and $(\ref{semi-stab-eqn})$ holds strictly for all \emph{proper} collections of subspaces $\left(V'(\underline{i}) \subseteq V(\underline{i})\right)_{i \in [k]}$. We say that a representation is \emph{$\sigma$-polystable} if it is a finite direct sum of $\sigma$-stable representations. We say that an AJN-datum $(\A, \cc, \tup)$ is \emph{semi-stable/simple/semi-simple} if the corresponding representation $V_{\A, \cc}$ is $\sigma_{\cc, \tup}$-semi-stable/stable/polystable.

In \cite{ChiDer-2021}, guided by invariant theoretic considerations and \cite[Construction 4.2]{GarGurOliWig-2017}, we introduced the \emph{Brascamp-Lieb operator} $T_{V,\sigma}$ associated to the quiver datum $(V, \sigma)$ and defined the capacity of $(V, \sigma)$ to be the capacity of $T_{V, \sigma}$. For an AJN-datum $(\A, \cc, \tup)$, we write $\capa(\A, \cc, \tup)$ for the capacity of the Brascamp-Lieb operator $T_{V_{\A, \cc}, \sigma_{\cc, \tup}}$.

In general, the capacity of any (square) completely positive operator $T$ equals that of its dual $T^*$ (see \cite[Proposition 3.14]{GarGurOliWig-2017}). In Lemma \ref{lemma-compute-cap}, we show that when working with the dual operator $T^*_{V_\A, \sigma_{\cc, \tup}}$ instead of $T_{V_\A, \sigma_{\cc, \tup}}$, the AJN-constant $\mm(\A, \cc, \tup)$ can be expressed in terms of $\capa(\A, \cc, \tup)$ as follows:
\begin{equation} \label{capa-AJN-eqn}
e^{-2\mm(\A, \cc, \tup)}=\capa(\A,\cc, \tup)=\inf \left \{ {\prod_{j=1}^m \det \left( \sum_{i=1}^k A_{ij} \cdot \Sigma_i \cdot A_{ij}^T \right)^{p_j} \over \prod_{i=1}^k \det(\Sigma_i)^{c_i}} \;\middle|\; \Sigma_i \in \s^{+}_{d_i} \right \},
\end{equation}
and
\begin{equation}
\mm(\A, \cc, \tup) < \infty \Longleftrightarrow \capa(V_\A, \sigma_{\cc, \tup})>0.
\end{equation}
(Here, for any positive integer $d$, $\s^{+}_d$ denotes the set of all $d \times d$ (symmetric) positive definite real matrices.) 

Following \cite{ChiDer-2021}, we say that the quiver datum $(V_\A,\sigma_{\cc, \tup})$ is \emph{gaussian-extremizable} if the infimum in $(\ref{capa-AJN-eqn})$ is attained for some positive definite matrices $\Sigma_i \in \RR^{d_i}$, $i \in [k]$. If this is the case, we call such a $k$-tuple $(\Sigma_1, \ldots, \Sigma_k)$ a \emph{gaussian extremizer} for $(V_{\A, \cc},\sigma_{\cc, \tup})$. Another important concept introduced in \cite{ChiDer-2021} is that of a geometric quiver datum. Specifically, we say that $(V_{\A, \cc},\sigma_{\cc, \tup})$ is a \emph{geometric quiver datum} if the corresponding BL operator $T_{V_A, \sigma_{\cc, \tup}}$ is doubly-stochastic which simply means that
\begin{equation}
\sum_{j=1}^m p_j A_{ij}^T \cdot A_{ij}=c_i \Id_{d_i}, \forall i \in [k], \text{~and~}\sum_{i=1}^kA_{ij}\cdot A_{ij}^T=\Id_{n_j}, \forall j \in [m]. 
\end{equation}
Thus we have that
\begin{itemize}
\item $(V_{\A, \cc},\sigma_{\cc, \tup})$ is gaussian-extremizable if and only if $(\A, \cc, \tup)$ is gaussian-extremizable in the sense of \cite[Section III]{AJN-2022};

\item $(V_{\A, \cc},\sigma_{\cc, \tup})$ is a geometric quiver datum if and only if $(\A, \cc, \tup)$ is an AJN-geometric datum in the sense of \cite[Definition 5]{AraCouZha-2022}.
\end{itemize}
\noindent
One of the advantages of working with doubly stochastic completely positive operators is that their capacities are always one (see \cite[Proposition 2.8 and Lemma 3.4]{GarGurOliWig-2020}). Consequently, $\capa(\A, \cc, \tup)=1$ and $\mm(\A, \cc, \tup)=0$ for any AJN-geometric datum $(\A, \cc, \tup)$.

To state our main result, we need to recall a few more concepts. The base change group $\mathbf{G}:=\prod_{i=1}^k \GL(d_i)\times \prod_{j=1}^m \GL(n_j)$ acts on the space of tuples $B=(B_{ij})_{i \in [k], j \in [m]}$ with $B_{ij} \in \RR^{n_j \times d_i}$ by simultaneous conjugation, \emph{i.e.}, for any $g=(g(x))_{x \in Q_0} \in \mathbf{G}$,  we have that
$$
g \cdot B=(g(j)\cdot B_{ij}\cdot g(\underline{i})^{-1})_{i \in [k], j\in [m]}.
$$

The \emph{character} of $\mathbf{G}$ induced by the exponents $(\cc, \tup)$ is the rational character $\chi_{\cc, \tup}:\mathbf{G} \to \RR^{\times}=\RR \setminus \{0\}$ defined by 
$$
\chi_{\cc, \tup}(g)=\prod_{i\in [k]}\det(g(\underline{i}))^{c_i} \cdot \prod_{j \in [m]}\det(g(j))^{-p_j}, 
$$
for all $g=(g(x))_{x \in Q_0} \in \mathbf{G}$. We say that an AJN-datum $(\widetilde{\A}, \cc, \tup)$ of dimension vector $(\dd, \nn)$ is a \emph{degeneration} of $(\A, \cc, \tup)$ if the representation $V_{\widetilde{\A}, \cc}$ lies in the closure of the $\mathbf{G}_{\cc, \tup}$-orbit of $V_{\A, \cc}$ where $\mathbf{G}_{\cc, \tup}$ is the kernel of the character $\chi_{\cc, \tup}$. 

We are now ready to state our main result.

\begin{theorem} \label{main-thm} Let $(\A, \cc, \tup)$ be an AJN-datum and let $\mm(\A, \cc, \tup)$ be the corresponding AJN-constant. 

\begin{enumerate}
\item $(\A, \cc, \tup)$ is feasible if and only if it is semi-stable.
\bigskip

\item (\textbf{Kempf-Ness Theorem for AJN data}) The datum $(\A\cc, \tup)$ is semi-simple if and only if $\G(\A, \cc, \tup) \neq \emptyset$ where
$$
\G(\A, \cc, \tup):=\{g \in \mathbf{G} \mid (g\cdot \A, \cc, \tup) \text{~is an AJN-geometric datum}\}.
$$
\smallskip

\item (\textbf{Character formula for AJN constants}) Assume that $(\A, \cc, \tup)$ is a feasible datum. Then there exists a semi-simple degeneration $(\widetilde{\A}, \cc, \tup)$ of $(\A, \cc, \tup)$. Furthermore, for any such degeneration $(\widetilde{\A}, \cc, \tup)$, the following formula holds
$$
\mm(\A, \cc, \tup)=-\log(|\chi_{\cc, \tup}(g)|), \forall g \in \G(\widetilde{\A}, \cc, \tup).
$$ 	
\smallskip

\item (\textbf{Decomposition of AJN constants}) Assume that 
$$
A_{ij}=\left(
\begin{matrix}
B_{ij} & X_{ij}\\
0&C_{ij}
\end{matrix}\right) \in \RR^{n_j\times d_i},
$$
where $B_{ij} \in \RR^{n^1_j\times d^1_i}$, $C_{ij} \in \RR^{n^2_j\times d^2_i}$, and $X_{ij} \in \RR^{n^1_j\times d^2_i}$. If 
$$
\sum_{i=1}^k c_i d^1_i=\sum_{j=1}^m p_j n^1_j
$$
then
$$
\mm(\A, \cc, \tup)=\mm(\B,\cc, \tup)+ \mm(\C,\cc, \tup).
$$

\bigskip		
		
\item (\textbf{Gaussian extremizers: existence}) The datum $(\A, \cc, \tup)$ is gaussian-extremizable if and only if $(\A, \cc, \tup)$ is semi-simple. If this is the case then the gaussian extremizers of $(\A, \cc, \tup)$ are the $k$-tuples of matrices 
$$
(g(\underline{i})^{-1} \cdot g(\underline{i})^{-T})_{i \in [k]} \text{~with~} g \in \G(\A, \cc, \tup).$$ 

\bigskip

\item (\textbf{Gaussian extremizers: uniqueness}) (a) If $(\A, \cc, \tup)$ has unique gaussian extremizers (up to scaling) then $(\A, \cc, \tup)$ is a simple datum. \\

\noindent
(b) If $(\A, \cc, \tup)$ is a Schur simple datum then $(\A, \cc, \tup)$ has unique gaussian extremizers (up to scaling).

\end{enumerate}
\end{theorem}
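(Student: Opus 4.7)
My strategy is to exploit the parametrization of gaussian extremizers from part $(5)$, namely $\Sigma_i = g(\underline{i})^{-1}g(\underline{i})^{-T}$ for $g \in \G(\A,\cc,\tup)$, and ask when two such $g$'s produce extremizers differing only by a common positive scalar. The key observation is that left-multiplying $g$ by an element $k$ of the maximal compact $\mathbf{K} := \prod_{i}O(d_i)\times \prod_{j}O(n_j) \subset \mathbf{G}$ leaves each $\Sigma_i$ unchanged (because $k(\underline{i})$ is orthogonal), while multiplying $g$ by the scalar automorphism $\lambda \Id$ at every vertex rescales every $\Sigma_i$ by $\lambda^{-2}$. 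So uniqueness up to a common scalar is controlled by $\G(\A,\cc,\tup)$ modulo $\mathbf{K} \cdot \Stab_{\mathbf{G}}(V_{\A,\cc})$.

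\textbf{Part (a).} I argue the contrapositive. If $(\A,\cc,\tup)$ is not semi-simple, part $(5)$ says there are no extremizers at all; so assume it is semi-simple but not simple. Then $V_{\A,\cc}$ decomposes non-trivially as a direct sum of subrepresentations $V_{\A,\cc} = W_1 \oplus W_2$. Since the bijection $\Sigma \mapsto (h(\underline{i})\,\Sigma_i\, h(\underline{i})^T)_{i}$ between extremizer sets for $(\A,\cc,\tup)$ and $(h \cdot \A, \cc,\tup)$ preserves common-scalar relations, uniqueness-up-to-scaling is a $\mathbf{G}$-orbit invariant. So after replacing $\A$ by a suitable $h\cdot \A$, we may assume $A_{ij} = \mathrm{diag}(A_{ij}^{(1)}, A_{ij}^{(2)})$ is block-diagonal with respect to the decomposition. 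Applying part $(2)$ to each summand produces $g_k$ making $(A^{(k)}, \cc, \tup)$ geometric, and the block-diagonal $g_0 = \mathrm{diag}(g_1, g_2)$ then lies in $\G(\A,\cc,\tup)$ because a direct sum of doubly-stochastic operators is doubly stochastic. For any $\lambda, \mu \in \RR^{\times}$, the element $s_{\lambda,\mu}$ acting at every vertex by $\mathrm{diag}(\lambda \Id, \mu \Id)$ stabilizes $V_{\A,\cc}$, so $g_0 \cdot s_{\lambda,\mu} \in \G(\A,\cc,\tup)$, and the associated extremizer has source block $\mathrm{diag}\bigl(\lambda^{-2}\, g_1(\underline{i})^{-1}g_1(\underline{i})^{-T},\ \mu^{-2}\, g_2(\underline{i})^{-1}g_2(\underline{i})^{-T}\bigr)$. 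Choosing $\lambda^2 \ne \mu^2$ yields a second extremizer not related to the $\lambda=\mu=1$ extremizer by any common scalar, contradicting uniqueness.

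\textbf{Part (b).} Assume $(\A,\cc,\tup)$ is Schur simple, and let $g_0, g_1 \in \G(\A,\cc,\tup)$. Then $B^{(0)} := g_0 \cdot \A$ and $B^{(1)} := g_1 \cdot \A$ are both geometric. The $\sigma_{\cc,\tup}$-stability of $V_{\A,\cc}$ makes its $\mathbf{G}_{\cc,\tup}$-orbit closed, and the geometric condition is the vanishing of the moment map attached to the quiver capacity of \cite{ChiDer-2021}. By the Kempf--Ness uniqueness theorem, already underlying parts $(2)$ and $(5)$, two moment-map zeros in a closed orbit differ by an element of $\mathbf{K}\cdot \Stab_{\mathbf{G}}(V_{\A,\cc})$. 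The Schur hypothesis gives $\End_{\Q}(V_{\A,\cc}) = \RR$, so $\Stab_{\mathbf{G}}(V_{\A,\cc}) = \{\lambda \Id : \lambda \in \RR^{\times}\}$ consists of diagonal scalars. Writing $g_1 g_0^{-1} = \lambda\, k$ with $k \in \mathbf{K}$ and $\lambda \in \RR^{\times}$, the orthogonality of $k(\underline{i})$ yields $g_1(\underline{i})^{-1}g_1(\underline{i})^{-T} = \lambda^{-2}\, g_0(\underline{i})^{-1}g_0(\underline{i})^{-T}$ for every $i \in [k]$, so all extremizers of $(\A,\cc,\tup)$ coincide up to the single scalar $\lambda^{-2}$.

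The main obstacle I anticipate is invoking Kempf--Ness uniqueness in the right form for the \emph{full} base-change group $\mathbf{G}$ rather than only for $\mathbf{G}_{\cc,\tup}$. This should be handled by the balance condition $\sum c_i d_i = \sum p_j n_j$, which places the scalar center $\RR^{\times}\cdot\Id$ inside $\mathbf{G}_{\cc,\tup}$ and ensures that the remaining $\mathbf{G}/\mathbf{G}_{\cc,\tup}$ freedom is matched precisely by the scalar ambiguity already built into the parametrization of gaussian extremizers in part $(5)$.
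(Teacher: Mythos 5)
Your proposal addresses only part (6) of the theorem; parts (1)--(5) are taken as given (indeed your argument explicitly uses (2) and (5) as inputs), so as a proof of the full statement it is incomplete. In the paper those parts are obtained by specializing Theorem \ref{semi-stab-cap-thm}, Lemma \ref{lemma-compute-cap}, Theorem \ref{quiver-geom-data-thm} and Theorem \ref{gaussian-extremizers-thm} to the quiver datum $(V_{\A,\cc},\sigma_{\cc,\tup})$; if the intended scope was the whole theorem, you would at least need to cite or reprove those reductions.

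For part (6) itself, your argument is correct and is essentially the paper's own proof of Theorem \ref{uniqueness-g-extremals-thm}. In (b) the paper likewise invokes Kempf--Ness theory (via \cite[Theorem 1.1(i)]{Boh-Laf-2017}) to write $g_2=\lambda(h\cdot g_1)$ with $h\in\mathbf{O}(\beta)$ and $\lambda\in\RR^{\times}$ coming from $\End_Q(\V)^{\times}=\RR^{\times}\Id$ under the Schur hypothesis, and then cancels the orthogonal factor in $g(v_i)^{-1}g(v_i)^{-T}$; in (a) the paper likewise decomposes a polystable non-stable $\V$ as $\V_1\oplus\V_2$, takes $g_1\in\G_{\sigma}(\V_1)$, $g_2\in\G_{\sigma}(\V_2)$, and rescales the two blocks by $\lambda_1,\lambda_2$ with $|\lambda_1|\neq|\lambda_2|$ to produce non-proportional extremizers. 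Your extra steps (reducing to a block-diagonal representative via the $\mathbf{G}$-orbit invariance of uniqueness-up-to-scaling, and observing that a direct sum of geometric data is geometric) are sound and merely make explicit what the paper leaves implicit. The one point you flag as an obstacle --- applying Kempf--Ness uniqueness to $\G(\A,\cc,\tup)\subseteq\mathbf{G}$ rather than to $\mathbf{G}_{\cc,\tup}$ --- is real but is glossed over by the paper in exactly the same way, and your resolution via the balance condition placing the scalars inside $\mathbf{G}_{\cc,\tup}$ is the right one.
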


Theorem \ref{main-thm}{(1)} has also been proved in \cite{AJN-2022} and \cite{Cou-2019}. Parts $(2)$ and $(5)$ of Theorem \ref{main-thm} show that $(\A, \cc, \tup)$ is gaussian-extremizable if and only if $(g \cdot \A, \cc, \tup)$ is an AJN-geometric datum for some $g \in \mathbf{G}$. This equivalence has also been proved in \cite{AraCouZha-2022}. Nonetheless, our quiver invariant theoretic methods are very different than those in \emph{loc. cit.}

\section{Background on Quiver Invariant Theory} \label{BL-operators-sec}
Throughout, we work over the field $\RR$ of real numbers and denote by $\NN=\{0,1,\dots \}$. For a positive integer $L$, we denote by $[L]=\{1, \ldots, L\}$.

A quiver $Q=(Q_0,Q_1,t,h)$ consists of two finite sets $Q_0$ (\emph{vertices}) and $Q_1$ (\emph{arrows}) together with two maps $t:Q_1 \to Q_0$ (\emph{tail}) and $h:Q_1 \to Q_0$ (\emph{head}). We represent $Q$ as a directed graph with set of vertices $Q_0$ and directed edges $a:ta \to ha$ for every $a \in Q_1$.  We assume throughout that $Q$ is a connected quiver, \emph{i.e.}, its underlying graph is connected.

A representation of $Q$ is a family $\V=(\V(x), \V(a))_{x \in Q_0, a\in Q_1}$ where $\\V(x)$ is a finite-dimensional $\RR$-vector space for every $x \in Q_0$, and $\V(a): \V(ta) \to \V(ha)$ is a $\RR$-linear map for every $a \in Q_1$. A \emph{subrepresentation} $\V'$ of $\V$, written as $V' \leq \V$, is a representation of $Q$ such that $\V'(x) \subseteq \V(x)$ for every $x \in Q_0$, and $\V(a)(\V'(ta)) \subseteq \V'(ha)$ and $\V'(a)$ is the restriction of $\V(a)$ to $\V(ta)$ for every arrow $a \in Q_1$. 

A morphism $\varphi:\V \rightarrow W$ between two representations is a collection $(\varphi(x))_{x \in Q_0}$ of $\RR$-linear maps with $\varphi(x) \in \Hom_K(\V(x), W(x))$ for every $x \in Q_0$, and such that $\varphi(ha) \circ V(a)=\V(a) \circ \varphi(ta)$ for every $a \in Q_1$. For a representation $\V$, the one-dimensional vector space $\{(\lambda \Id_{\V(x)})_{x \in Q_0} \mid \lambda \in \RR\}$ is always a subspace of the space of endomorphisms, $\End_Q(V)$, of $\V$. We say that $\V$ is a \emph{Schur representation} if $\End_Q(\V)$ is one-dimensional.

The dimension vector $\ddim \V \in \NN^{Q_0}$ of a representation $\V$  is defined by $\ddim \V(x)=\dim_{\RR} \V(x)$ for all $x \in Q_0$. By a dimension vector of $Q$, we simply mean a $\ZZ_{\geq 0}$-valued function on the set of vertices $Q_0$. For two vectors $\theta, \beta \in \RR^{Q_0}$, we define $\theta \cdot \beta=\sum_{x \in Q_0} \theta(x)\beta(x)$. 

Let $\beta \in \NN^{Q_0}$ be a dimension vector. The representation space of $\beta$-dimensional representations of $Q$ is the affine space
$$
\rep(Q,\beta)=\prod_{a \in Q_1}\RR^{\beta(ha)\times \beta(ta)}.
$$
The base change group $\GL(\beta)=\prod_{x \in Q_0}\GL(\beta(x), \RR)$ acts on $\rep(Q,\beta)$ by simultaneous conjugation, i.e. for $g=(g(x))_{x \in Q_0}$ and $\V=(\V(a))_{a \in Q_1}$, we have that
$$(g\cdot \V)(a)=g(ha)\cdot \V(a) \cdot g(ta)^{-1}, \forall a \in Q_1.
$$
Note that there is a bijective correspondence between the isomorphism classes of representations of $Q$ of dimension vector $\beta$ and the $\GL(\beta)$-orbits in $\rep(Q,\beta)$.

\subsection{The capacity of Brscamp-Lieb operators associated to quiver data} From now on, we assume that $Q$ is a (not necessarily complete) bipartite quiver, \emph{i.e.}, $Q_0$ is the disjoint union of two subsets $Q^+_0$ and $Q^-_0$, and all the arrows in $Q$ go from $Q^+_0$ to $Q^-_0$. Write $Q_0^{+}=\{v_1, \ldots, v_k\}$ and $Q_0^{-}=\{w_1,\ldots, w_m\}$.  

Let us fix an integral weight $\sigma \in \ZZ^{Q_0}$ such that $\sigma$ is positive on $Q^+_0$ and negative on $Q^-_0$. Define
$$\sigma_{+}(v_i)=\sigma(v_i), \forall i \in [k], \text{~and~}
\sigma_{-}(w_j)=-\sigma(w_j), \forall j \in [m].
$$
Let us assume that $\sigma \cdot \beta=0$ which is equivalent to
$$
N:=\sum_{i=1}^k \sigma_{+}(v_i)\beta(v_i)=\sum_{j=1}^m \sigma_{-}(w_j)\beta(w_j).
$$

For $i\in [k]$ and $j \in [m]$, we denote the set of all arrows in $Q$ from $v_i$ to $w_j$  by $\ar_{ij}$. If there are no arrows from $v_i$ to $w_j$, we define $\ar_{ij}$ to be the set consisting of the symbol $\0_{ij}$.

Let $M:=\sum_{j=1}^m \sigma_{-}(w_j) \text{~and~} M':=\sum_{i=1}^k \sigma_{+}(v_i).$
For each $j \in [m]$ and $i \in [k]$, define 

$$
\jj:=\{q \in \ZZ \mid \sum_{l=1}^{j-1} \sigma_{-}(w_l) < q \leq  \sum_{l=1}^{j} \sigma_{-}(w_l) \},
$$
and 
$$
\ii:=\{r \in \ZZ \mid \sum_{l=1}^{i-1} \sigma_{+}(v_l) < r \leq  \sum_{l=1}^{i} \sigma_{+}(v_l) \}.
$$
In what follows, we consider $M \times M'$ block matrices of size $N \times N$ such that for any two indices $q \in \jj$ and $r \in \ii$, the $(q,r)$-block-entry is a matrix of size $\beta(w_j)\times \beta(v_i)$. Set
$$
\s:= \{(i,j,a,q,r) \mid i \in [k], j \in [m],\\ a \in \ar_{ij}, \\ q \in \jj, r \in \ii \}.
$$

\smallskip
Now, let $\V\in \rep(Q,\beta)$ be a $\beta$-dimensional representation of $Q$. For each $(i,j,a,q,r) \in \s$, let $\V^{ij,a}_{q,r}$ be the $M \times M'$ block matrix whose $(q,r)$-block-entry is $\V(a) \in \RR^{\beta(w_j)\times \beta(v_i)}$, and all other entries are zero. The convention is that if $a=\0_{ij} \in \ar_{ij}$ then $\V(a)$ is the zero matrix of size $\beta(w_j)\times \beta(v_i)$; hence, if there are no arrows from $v_i$ to $w_j$ then $\V^{i,j,a}_{q,r}$ is the zero matrix of size $N \times N$.

\begin{definition} \label{BL-operator-def} Let $\V \in \rep(Q,\beta)$ be a $\beta$-dimensional representation of $Q$.
\begin{enumerate}[(i)]
\item The \emph{Brascamp-Lieb} operator $T_{\V, \sigma}$ associated to $(\V, \sigma)$ is defined to be the completely positive operator with Kraus operators $\V^{i,j,a}_{q,r}$, $(i,j,a,q,r) \in \s$, \emph{i.e.},
\begin{align*}
T_{(\V,\sigma)}: \RR^{N \times N} & \to \RR^{N \times N}\\
X& \to T_{\V,\sigma}(X):=\sum_{(i,j,a,q,r)}(\V^{i,j,a}_{q,r})^T\cdot X \cdot \V^{i,j,a}_{q,r}
\end{align*}

\item The \emph{capacity} $\capa(\V,\sigma)$ of $(\V,\sigma)$ is defined to be the capacity of $T_{V, \sigma}$, \emph{i.e.},
$$
\capa(\V,\sigma):=\inf \{\Det(T_{\V,\sigma}(X)) \mid  X \in \s^{+}_N,  \Det(X)=1 \}.
$$
(Here, for a given positive integer $d$, we denote by $\s^{+}_d$ the set of all $d \times d$ (symmetric) positive definite real matrices.) 

\bigskip
\noindent
For an AJN-datum $(\A, \cc, \tup)$, we define
\begin{equation}
\capa(\A, \cc, \tup):=\capa(\V_\A, \sigma_{\cc, \tup}).
\end{equation}
\end{enumerate}
\end{definition}

\bigskip
\noindent
In \cite{ChiDer-2021}, we showed that
\begin{equation}
\capa(\V,\sigma)=\inf\left \{ { \prod_{i=1}^n \det \left( \sum_{j=1}^m \sigma_{-}(w_j)\left( \sum_{a \in \ar_{ij}} \V(a)^T \cdot Y_j \cdot \V(a)  \right) \right)^{\sigma_{+}(v_i)}   \over \prod_{j=1}^m \det(Y_j)^{\sigma_{-}(w_j)}} \right \},
\end{equation}
where the infimum is over all positive definite matrices $Y_j \in \s^{+}_{\beta(w_j)}, j \in [m]$.    
On the other hand, the capacity of a (square) completely positive operator can be computed as the capacity of its dual. Thus 
\begin{equation}
\capa(\V, \sigma)=\capa(T^*_{\V, \sigma}),
\end{equation}
where $T_{\V,\sigma}^*(X):=\sum_{(i,j,a,q,r)}\V^{i,j,a}_{q,r}\cdot X \cdot (\V^{i,j,a}_{q,r})^T$ is the dual operator of $T_{\V, \sigma}$. Working with $T^*_{\V, \sigma}$ instead of $T_{\V, \sigma}$, we obtain the following formula for $\capa(\V, \sigma)$.

\begin{lemma} (compare to \cite[Lemma 8]{ChiDer-2021}) \label{lemma-compute-cap} Let $(\V, \sigma)$ be a quiver datum with $\V \in \rep(Q,\beta)$. Then
\begin{align*}
&\capa(\V,\sigma)= \\
&=\inf\left \{ { \prod_{j=1}^m \det \left( \sum_{i=1}^k \sigma_{+}(v_i)\left( \sum_{a \in \ar_{ij}} \V(a) \cdot \Sigma_i \cdot \V(a)^T  \right) \right)^{\sigma_{-}(w_j)}   \over \prod_{i=1}^k \det(\Sigma_i)^{\sigma_{+}(v_i)}} \;\middle|\; \Sigma_i \in \s^{+}_{\beta(v_i)}    \right \}.
\end{align*}

Consequently, for an AJN-datum $(\A, \cc, \tup)$,
$$
\capa(\A,\cc, \tup)=\inf \left \{ {\prod_{j=1}^m \det \left( \sum_{i=1}^k A_{ij} \cdot \Sigma_i \cdot A_{ij}^T \right)^{p_j} \over \prod_{i=1}^k \det(\Sigma_i)^{c_i}} \;\middle|\; \Sigma_i \in \s^{+}_{d_i} \right \},
$$
and hence 
$$\capa(\A, \cc, \tup)=e^{-2 \mm(\A, \cc, \tup)}.
$$
\end{lemma}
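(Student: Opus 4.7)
\bigskip

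\noindent\textbf{Proof proposal.} The plan is to reduce the first displayed formula to its primal counterpart \cite[Lemma 8]{ChiDer-2021} via the capacity–duality identity $\capa(T)=\capa(T^*)$ for square completely positive operators \cite[Proposition 3.14]{GarGurOliWig-2017}, and then to specialize to AJN data by routine substitution and invoke \eqref{AJN-const} for the final identity.

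First, I would invoke $\capa(\V,\sigma)=\capa(T^*_{\V,\sigma})$ and unpack the dual operator. Its Kraus operators are the same block matrices $\V^{i,j,a}_{q,r}$, but they now act by left sandwiching: $T^*_{\V,\sigma}(X)=\sum \V^{i,j,a}_{q,r}\cdot X\cdot (\V^{i,j,a}_{q,r})^T$. Since each $\V^{i,j,a}_{q,r}$ has its only nonzero block at block-position $(q,r)$, a direct entry-wise computation shows that this sandwich extracts the $(r,r)$-diagonal block of $X$ (indexed by the \emph{source} vertex $v_i$) and places the result in the $(q,q)$-diagonal block of the output (indexed by the \emph{sink} vertex $w_j$). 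Thus, relative to the primal picture in \cite[Lemma 8]{ChiDer-2021}, the roles of source and sink are interchanged, and the natural ansatz is to let $X$ be the block-diagonal positive definite matrix in which $\Sigma_i\in\s^+_{\beta(v_i)}$ is repeated $\sigma_+(v_i)$ times along the diagonal, for each $i\in[k]$.

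Having set up this dual parametrization, the key calculation is mechanically identical to that in \cite[Lemma 8]{ChiDer-2021}, with source and sink swapped throughout: on such an $X$, one computes $\det(X)=\prod_i\det(\Sigma_i)^{\sigma_+(v_i)}$, while
$$
T^*_{\V,\sigma}(X)=\mathrm{diag}\Bigl(\sum_{i=1}^k\sigma_+(v_i)\sum_{a\in\ar_{ij}}\V(a)\,\Sigma_i\,\V(a)^T\Bigr)_{q\in\jj,\,j\in[m]},
$$
so $\Det(T^*_{\V,\sigma}(X))$ collects the claimed sink-indexed product of determinants with exponents $\sigma_-(w_j)$. A standard convexity/AM–GM reduction (which is precisely the content of the primal lemma) shows that restricting to such block-diagonal $X$ already achieves the infimum, yielding the first displayed formula.

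Finally, the AJN specialization is a routine substitution: plugging in $\V_\A(a_{ij})=\tfrac{1}{\sqrt{c_i}}A_{ij}$, $\sigma_+(v_i)=c_i$, $\sigma_-(w_j)=p_j$, $\beta(v_i)=d_i$, $\beta(w_j)=n_j$ gives
$$
\sigma_+(v_i)\,\V_\A(a_{ij})\,\Sigma_i\,\V_\A(a_{ij})^T=c_i\cdot\tfrac{1}{c_i}A_{ij}\Sigma_i A_{ij}^T=A_{ij}\Sigma_i A_{ij}^T,
$$
and the claimed AJN formula for $\capa(\A,\cc,\tup)$ follows. The identity $\capa(\A,\cc,\tup)=e^{-2\mm(\A,\cc,\tup)}$ is then immediate from \eqref{AJN-const}: for independent centered Gaussians $Z_i\sim\mathcal{N}(0,\Sigma_i)$ one has $h(Z_i)=\tfrac{1}{2}\log\bigl((2\pi e)^{d_i}\det\Sigma_i\bigr)$ and $A_jZ\sim\mathcal{N}(0,\sum_i A_{ij}\Sigma_i A_{ij}^T)$; the $(2\pi e)$ contributions cancel thanks to the balance condition $\sum_i c_id_i=\sum_j p_jn_j$, so $2\mm(\A,\cc,\tup)=-\log\capa(\A,\cc,\tup)$. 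The only non-bookkeeping step is the block-diagonal reduction in the middle, which is the real engine of the argument; but this is exactly what is established (in its primal form) in \cite[Lemma 8]{ChiDer-2021}, so no new analytic input is required.
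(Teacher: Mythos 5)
Your proposal is correct and follows essentially the same route as the paper: pass to the dual operator $T^*_{\V,\sigma}$, compute its block-diagonal action on the $(r,r)$-blocks of $X$, reduce to block-diagonal $X$ with each $\Sigma_i$ repeated $\sigma_+(v_i)$ times via the generalized Hadamard/AM--GM inequalities, and then specialize to AJN data and the Gaussian entropy formula. The only difference is presentational: the paper carries out the block-diagonal reduction explicitly in three displayed steps rather than deferring it to the primal version of \cite[Lemma 8]{ChiDer-2021}, but the analytic content is identical.
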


\begin{proof} We have that
$$
T^*_{\V, \sigma}(X)=\sum_{(i,j,a,q,r)}\V^{i,j,a}_{q,r}\cdot X \cdot (\V^{i,j,a}_{q,r})^T, \forall X \in \RR^{N \times N}.
$$ 
In what follows, we view an $N \times N$ matrix $X$ as an $M' \times M'$ block matrix. Furthermore, for every $i \in [k]$ and $r \in \ii$, denote the $(r,r)$-block-diagonal entry of $X$ by $X_{rr}$.  Thus, for any $(i,j,a,q,r) \in \s$, the matrix 
$$
\V^{i,j,a}_{q,r} \cdot X \cdot (\V^{i,j,a}_{q,r})^T
$$ 
has an $M \times M$ block matrix structure whose $(q,q)$-block entry is
$$
\V(a)\cdot X_{rr}\cdot (\V(a))^T,
$$
and all other blocks are zero. So, $T^*_{\V, \sigma}(X)$ is the $M \times M$ block-diagonal matrix whose $(q,q)$-block-diagonal entry is
$$
\sum_{i=1}^n \sum_{a \in \ar_{ij}} \V(a) (\sum_{r \in \ii} X_{rr}) \V(a)^T,
$$
for all $q \in \jj$ and $j \in [m]$. We we can write

\begin{align*}
&\capa(\V, \sigma)= \inf \{ \det(T^*_{\V,\sigma}(X)) \mid X \in \s^+_N, \det(X)=1 \}\\
=&\inf \left \{ \prod_{j=1}^m \det \left( \sum_{i=1}^k \sum_{a \in \ar_{ij}}  \V(a) \left( \sum_{r \in \ii} X_{rr} \right) \V(a)^T   \right)^{\sigma_{-}(w_j)} \;\middle|\; X \in \s^{+}_{N} , \det(X)=1 \right \} \\
\stackrel{(iii)}{=}&\inf \left \{ \prod_{j=1}^m \det \left( \sum_{i=1}^k \sum_{a \in \ar_{ij}}  \V(a) \left( \sum_{r \in \ii} X_r \right) \V(a)^T   \right)^{\sigma_{-}(w_j)} \;\middle|\; X_r \in \s^{+}_{\beta(v_i)},  \prod_{i=1}^k \prod_{r \in \ii}\det(X_r)=1 \right \} \\
\stackrel{(iv)}{=}&\inf \left \{ \prod_{j=1}^m \det \left( \sum_{i=1}^k\sigma_{+}(v_i) \left( \sum_{a \in \ar_{ij}}  \V(a) \cdot \Sigma_i \cdot \V(a)^T \right)  \right)^{\sigma_{-}(w_j)} \;\middle|\; \Sigma_i \in \s^{+}_{\beta(v_i)},  \prod_{i=1}^k \det(\Sigma_i)^{\sigma_{+}(v_i)}=1 \right \} \\
\stackrel{(v)}{=}&\inf \left \{ { \prod_{j=1}^m \det \left( \sum_{i=1}^k \sigma_{+}(v_i) \left( \sum_{a \in \ar_{ij}}  \V(a) \cdot \Sigma_i \cdot \V(a)^T \right)   \right)^{\sigma_{-}(w_j)} \over \prod_{i=1}^k \det(\Sigma_i)^{\sigma_{+}(v_i)} } \;\middle|\; \Sigma_i \in \s^{+}_{\beta(v_i)} \right \}
\end{align*}
To prove $(iii)$ above, it is clear that the infimum displayed on the second line above is less than or equal to that on the third line. To prove the reverse inequality, let $X$ be a positive definite $N \times N$ real matrix with $\det(X)=1$. Denoting its block-diagonal entries by $X_r$, $r \in \ii$, $i \in [k]$, we get that
$$
1=\det(X) \leq C:=\prod_{i=1}^k \prod_{r \in \ii}\det(X_r).
$$ 
Setting $\widetilde{X}_r={1 \over \sqrt[N]{C}}X_r$, we get that $\prod_{i=1}^m \prod_{r \in \ii}\det(\widetilde{X}_r)=1$, and
\begin{align*}
\prod_{j=1}^m \det &  \left( \sum_{i=1}^k \sum_{a \in \ar_{ij}}  \V(a) \left( \sum_{r\in \ii} \widetilde{X}_r \right)  \V(a)^T \right)^{\sigma_{-}(w_j)}=\\
&={1 \over C} \prod_{j=1}^m \det \left( \sum_{i=1}^k \sum_{a \in \ar_{ij}}  \V(a) \left( \sum_{r \in \ii} X_r \right) \V(a)^T   \right)^{\sigma_{-}(w_j)}. 
\end{align*}
This now gives get the reverse inequality, proving the third equality above. For $(iv)$, it is clear that the infimum on the line above is less than or equal to that on the fourth line. To prove the reverse inequality, let $X_r \in \s^{+}_{\beta(v_i)}$, $r \in \ii$, $i \in [k]$, be matrices such that $\prod_{i=1}^k \prod_{r \in \ii}\det(X_r)=1$.  Using the generalized Hadamard's inequality, we have
$$
\prod_{r \in \ii} \det(X_r) \leq \det(\widetilde{\Sigma}_i)^{\sigma_{+}(v_i)} \text{~where~}\widetilde{\Sigma}_i:={\sum_{r \in \ii} X_r \over \sigma_{+}(v_i)} \text{~for all~} i \in [k].
$$
Now let $D>0$ be such that $D^N=\prod_{i\in [k]} \det(\widetilde{\Sigma}_i)^{\sigma_{+}(v_i)}$. Then $D \geq 1$ and 
$$
\prod_{i \in [k]} \det(D^{-1} \widetilde{\Sigma}_i)^{\sigma_{+}(v_i)}=1.
$$
Finally, setting $\Sigma_i:=D^{-1}\widetilde{\Sigma}_i$, $i \in [k]$,  we get
\begin{align*}
\prod_{j=1}^m \det \left( \sum_{i=1}^k \sum_{a \in \ar_{ij}}  \V(a) \left( \sum_{r \in \ii} X_r \right) \V(a)^T   \right)^{\sigma_{-}(w_j)} \geq \\ 
\geq \prod_{j=1}^m  \det \left( \sum_{i=1}^k\sigma_{+}(v_i) \left( \sum_{a \in \ar_{ij}}  \V(a) \cdot \Sigma_i \cdot \V(a)^T \right)  \right)^{\sigma_{-}(w_j)} &,
\end{align*}
which proves $(iv)$. For $(v)$, simply work with ${\Sigma_i \over \sqrt[N]{\prod_{i=1}^k \det(\Sigma_i)^{\sigma_{+}(v_i)}}}$, $i \in [k]$, in the line above where $\Sigma_i \in \Mat_{\dd(v_i)\times \dd(v_i)}$, $i \in [k]$, are arbitrary positive definite matrices.
\end{proof}

Let $\chi_{\sigma}:\GL(\beta)\to \RR^{\times}$ be the character induced by $\sigma$, i.e. $\chi_{\sigma}(g)=\prod_{x \in Q_0}\det(g(x))^{\sigma(x)}$ for all $g=(g(x))_{x \in Q_0} \in \GL(\beta)$, and denote its kernel by $\GL(\beta)_{\sigma}$. As a consequence of the lemma above, we get the following formula for the capacity along $\GL(\beta)$-orbits.

\begin{corollary} \label{capa-char-formula-coro} Let $(\V, \sigma)$ be a quiver datum with $\V \in \rep(Q,\beta)$. Then 
$$
\capa(\V, \sigma)=(\chi_{\sigma}(g))^2 \cdot \capa(g \cdot \V, \sigma), \forall g=(g(x))_{x \in Q_0} \in \GL(\beta).
$$
In particular, if $g \in \GL(\beta)_{\sigma}$ then $$\capa(\V,\sigma)=\capa(g \cdot \V, \sigma),$$ i.e. the capacity is constant along $\GL(\beta)_{\sigma}$-orbits.
\end{corollary}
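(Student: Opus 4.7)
The plan is to substitute directly into the capacity formula of Lemma \ref{lemma-compute-cap} applied to $(g \cdot \V, \sigma)$, and then perform a change of variables in the infimum. Fix $g = (g(x))_{x \in Q_0} \in \GL(\beta)$. For each arrow $a \in \ar_{ij}$ we have $(g \cdot \V)(a) = g(w_j)\,\V(a)\,g(v_i)^{-1}$, so setting $\widetilde{\Sigma}_i := g(v_i)^{-1}\Sigma_i\, g(v_i)^{-T}$ and factoring $g(w_j), g(w_j)^T$ outside gives
$$
\sum_{i=1}^k \sigma_+(v_i) \sum_{a \in \ar_{ij}} (g \cdot \V)(a)\, \Sigma_i\, (g \cdot \V)(a)^T = g(w_j) \left( \sum_{i=1}^k \sigma_+(v_i) \sum_{a \in \ar_{ij}} \V(a)\, \widetilde{\Sigma}_i\, \V(a)^T \right) g(w_j)^T.
$$
Since $\Sigma_i \mapsto \widetilde{\Sigma}_i$ is a bijection of $\s^+_{\beta(v_i)}$ onto itself, the infimum is unchanged if we reindex it in terms of the tuple $(\widetilde{\Sigma}_i)_{i \in [k]}$.

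Next, I would take the $\sigma_-(w_j)$-th power of the determinant on both sides of the identity above — each such factor picks up a scalar $\det(g(w_j))^{2\sigma_-(w_j)}$ — and track the analogous scaling $\det(\Sigma_i)^{\sigma_+(v_i)} = \det(g(v_i))^{2\sigma_+(v_i)}\, \det(\widetilde{\Sigma}_i)^{\sigma_+(v_i)}$ in the denominator. Substituting into the capacity formula for $g \cdot \V$ and separating the $g$-dependent scalar from the $\widetilde{\Sigma}$-dependent ratio yields
$$
\capa(g \cdot \V, \sigma) = \frac{\prod_{j=1}^m \det(g(w_j))^{2\sigma_-(w_j)}}{\prod_{i=1}^k \det(g(v_i))^{2\sigma_+(v_i)}} \cdot \capa(\V, \sigma).
$$
Using $\sigma(v_i) = \sigma_+(v_i)$ and $\sigma(w_j) = -\sigma_-(w_j)$, the scalar prefactor equals $\prod_{x \in Q_0} \det(g(x))^{-2\sigma(x)} = \chi_\sigma(g)^{-2}$, which rearranges to the stated identity. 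The last assertion is then immediate: for $g \in \GL(\beta)_\sigma$ one has $\chi_\sigma(g) = 1$, so $\capa(\V, \sigma) = \capa(g \cdot \V, \sigma)$.

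I do not expect any real obstacle: the computation is essentially mechanical once one observes that the conjugation action on $\V$ translates into a congruence transformation on the matrices $\Sigma_i$ appearing inside the infimum. The only care point is bookkeeping of the exponents so that $\sigma_+$ and $\sigma_-$ reassemble correctly into the signed weight $\sigma$ defining $\chi_\sigma$.
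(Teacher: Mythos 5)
Your proof is correct and is exactly the argument the paper intends: the paper states this corollary without proof as "a consequence of the lemma above," and the omitted computation is precisely your change of variables $\Sigma_i \mapsto g(v_i)^{-1}\Sigma_i\,g(v_i)^{-T}$ in the infimum of Lemma \ref{lemma-compute-cap}, with the determinant bookkeeping reassembling into $\chi_\sigma(g)^{-2}$. No gaps.
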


\subsection{Semi-stability of a quiver datum and the positivity of its capacity}
Let $\V$ be a representation and $\sigma \in \ZZ^{Q_0}$ an integral weight of an (arbitrary) quiver $Q$. We say that $\V$ is \emph{$\sigma$-semi-stable} if
\begin{equation} \label{semi-stab-defn}
\sigma \cdot \ddim \V=0 \text{~and~} \sigma \cdot \ddim V' \leq 0,
\end{equation}
for all subrepresentations $\V'$ of $\V$. We say that $\V$ is \emph{$\sigma$-stable} if $(\ref{semi-stab-defn})$ holds with strict inequalities for all proper subrepresentations $\V'$ of $ \V$ with $\V' \neq 0, \V$.  A representation is said to be \emph{$\sigma$-polystable} if it is a direct sum of $\sigma$-stable representations.

\begin{rmk} It is immediate to see that any \emph{complex} $\sigma$-stable representation is a Schur representations. However this does not hold for real representations. This is the reason we need to assume that the simple datum is a Schur datum in part $6(b)$ of Theorem \ref{main-thm}.
\end{rmk}

\begin{rmk} For a bipartite quiver $Q$, it is easy to see that a representation $\V$ is $\sigma$-semi-stable if and only if $\sigma \cdot \ddim \V=0$ and
\begin{equation} \label{semi-stab-ineq}
\sum_{i=1}^k \sigma_{+}(v_i) \dim \V'(v_i) \leq \sum_{j=1}^m \sigma_{-}(w_j) \dim \left( \sum_{i=1}^k \sum_{a \in \ar_{ij}}\V(a)(\V'(v_i)) \right),
\end{equation}
for all subspaces $\V'(v_i) \leq \RR^{\beta(v_i)}$, $\forall i \in [k]$. 
\end{rmk}

\begin{theorem}\cite[Theorem 1]{ChiDer-2021} \label{semi-stab-cap-thm} Let $(\V, \sigma)$ be a quiver datum with $\V \in \rep(Q, \beta)$. Then
$$
\V \text{~is~} \sigma-\text{semi-stable} \Longleftrightarrow \capa(\V, \sigma)>0.
$$
Consequently, for an AJN-datum $(\A, \cc, \tup)$, the AJN-constant $\mm(\A, \cc, \tup)$ is finite if and only if $(\A, \cc, \tup)$ is semi-stable.
\end{theorem}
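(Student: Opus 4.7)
The plan is to prove both implications by working with the explicit formula for $\capa(\V,\sigma)$ from Lemma \ref{lemma-compute-cap} and exploiting the scale-invariance under the kernel of $\chi_\sigma$ established in Corollary \ref{capa-char-formula-coro}. The strategy splits cleanly along the two directions of the equivalence.

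For the direction $\capa(\V,\sigma) > 0 \Rightarrow \V$ is $\sigma$-semi-stable, I proceed by contrapositive. If $\sigma \cdot \ddim \V \neq 0$, a uniform rescaling $\Sigma_i \mapsto \lambda \Sigma_i$ drives the quotient in the infimum formula to $0$, since numerator and denominator scale with different powers of $\lambda$. Otherwise, given a collection $\V'(v_i) \subseteq \V(v_i)$ violating (\ref{semi-stab-ineq}), I would choose orthogonal projections $P_i$ onto $\V'(v_i)$ and set
$$\Sigma_i(t) = t^{-1} P_i + (I - P_i) \in \s^{+}_{\beta(v_i)}, \quad 0 < t \ll 1.$$
Then $\det \Sigma_i(t) = t^{-\dim \V'(v_i)}$, and the leading-order behavior in $t$ of each factor $\det\!\bigl( \sum_i \sigma_{+}(v_i) \sum_{a \in \ar_{ij}} \V(a) \Sigma_i(t) \V(a)^T \bigr)$ is controlled by $\dim \bigl( \sum_{i,a} \V(a)\V'(v_i) \bigr)$. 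Plugging into the quotient, the net exponent of $t$ is precisely the amount by which the left-hand side of (\ref{semi-stab-ineq}) exceeds the right-hand side, so the quotient tends to $0$ as $t \to 0^{+}$.

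The reverse implication is the analytically deeper direction. My approach is to combine King's characterization -- namely that $\V$ is $\sigma$-semi-stable if and only if there exist $n \geq 1$ and a $\GL(\beta)$-semi-invariant polynomial $f$ of weight $n\sigma$ with $f(\V) \neq 0$ -- with the character formula for $\capa$ along $\GL(\beta)$-orbits. Using the Derksen--Weyman/Schofield determinantal description of $\SI(Q,\beta)_{n\sigma}$ for bipartite $Q$, one obtains concrete nonvanishing semi-invariants built from $\V$. The cleanest route to positivity is then through the dual operator $T^{*}_{\V,\sigma}$: translate $\sigma$-semi-stability of $\V$ into the property that $T^{*}_{\V,\sigma}$ is \emph{rank non-decreasing} on positive semi-definite matrices, and then invoke the theorem of Gurvits (with refinements by Garg, Gurvits, Oliveira and Wigderson) that rank non-decreasing completely positive operators have strictly positive capacity.

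The main obstacle will be this combinatorial-to-operator translation: the $\sigma$-semi-stability inequality (\ref{semi-stab-ineq}) lives over subspaces of the small spaces $\V(v_i)$, whereas the rank-non-decreasing condition involves all PSD matrices on the large space $\RR^N$, with the block structure dictated by the index sets $\ii$ and $\jj$. Making this translation explicit, and verifying that the integer multiplicities $\sigma_{+}(v_i), \sigma_{-}(w_j)$ are absorbed correctly in the block decomposition, is where most of the technical work lies. Once the equivalence for $\capa$ is established, the consequence for AJN-constants is immediate from Lemma \ref{lemma-compute-cap}: the identity $e^{-2\mm(\A,\cc,\tup)} = \capa(\A,\cc,\tup)$ shows that finiteness of $\mm(\A,\cc,\tup)$ is equivalent to positivity of $\capa(\A,\cc,\tup)$.
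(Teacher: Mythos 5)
First, note that the paper itself gives no proof of this statement: it is imported verbatim as \cite[Theorem 1]{ChiDer-2021}, so the only fair comparison is with the strategy of that reference, which your outline does in fact reconstruct. Your forward direction (positivity implies semi-stability, argued contrapositively) is essentially complete and correct: the uniform rescaling disposes of the case $\sigma\cdot\ddim\V\neq 0$, and for a violating collection of subspaces the test matrices $\Sigma_i(t)=t^{-1}P_i+(I-P_i)$ give the denominator exactly $t^{-\sum_i\sigma_+(v_i)\dim\V'(v_i)}$ while each numerator factor satisfies the upper bound $\det(M_j(t))\leq C\,t^{-\dim(\sum_{i,a}\V(a)\V'(v_i))}$ (an upper bound is all you need, which sidesteps any worry about degenerate leading coefficients), so the quotient is $O(t^{\varepsilon})$ with $\varepsilon>0$ equal to the defect in (\ref{semi-stab-ineq}). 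The deduction of the AJN statement from Lemma \ref{lemma-compute-cap} is also fine.

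The gap is in the reverse direction, and you have located it yourself: the assertion that $\sigma$-semi-stability of $\V$ translates into the rank non-decreasing property of $T^{*}_{\V,\sigma}$ on PSD matrices of $\RR^{N\times N}$ is not an obstacle to be smoothed over afterwards --- it \emph{is} the theorem, since once that equivalence is in hand the positivity of the capacity is exactly the Gurvits/Garg--Gurvits--Oliveira--Wigderson criterion. As written, the hard implication is a statement of intent with the correct toolbox attached, not an argument. Two further points. You invoke both King's semi-invariant criterion (via Derksen--Weyman/Schofield determinantal semi-invariants) and the rank non-decreasing route; these are two alternative proofs of the same implication, and stacking them obscures which one you would actually execute --- either suffices, but each requires the same block-combinatorial bookkeeping with the index sets $\ii$, $\jj$ and the multiplicities $\sigma_{+}(v_i)$, $\sigma_{-}(w_j)$ that you defer. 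Finally, if you do go through King's criterion, be aware that it is usually stated over an algebraically closed field; over $\RR$ one must either pass to the complexification (checking that $\sigma$-semi-stability is preserved, which it is since the Harder--Narasimhan filtration is defined over the base field) or argue directly that a real semi-invariant of weight $n\sigma$ is nonvanishing at the real point $\V$.
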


\subsection{Geometric quiver data} \label{geo-quiver-data-sec}
Let $(\V, \sigma)$ be a quiver datum with $\V \in \rep(Q,\beta)$ and let $T_{\V,\sigma}$ be the Brascamp-Lieb operator associated to $(\V,\sigma)$. By definition, $T_{\V,\sigma}$ is a \emph{doubly stochastic} operator if $T_{\V,\sigma}(\Id)=T^*_{\V,\sigma}(\Id)=\Id$. This is easily seen to be equivalent to
\begin{equation}\label{geom-eq-1}
\sum_{j=1}^m \sigma_{-}(w_j) \sum_{a \in \ar_{ij}} \V(a)^T \cdot \V(a)=\Id_{\beta(v_i)}, \forall i \in [k],
\end{equation}
and
\begin{equation}\label{geom-eq-2}
\sum_{i=1}^k \sigma_+(v_i) \sum_{a \in \ar_{ij}} \V(a)\cdot \V(a)^T=\Id_{\beta(w_j)}, \forall j \in [m].
\end{equation}

\begin{definition} (see \cite[Definition 12]{ChiDer-2021}) We say that $(V,\sigma)$ is a \emph{geometric quiver datum} if $V$ satisfies the matrix equations $(\ref{geom-eq-1})$ and $(\ref{geom-eq-2})$. 
\end{definition}
Consequently, for an AJN-datum $(\A, \cc, \tup)$, the corresponding quiver datum $(\V_\A, \sigma_{\cc, \tup})$ is geometric if and only if 
$$
\sum_{j=1}^m p_j A_{ij}^T \cdot A_{ij}=c_i \Id_{d_i}, \forall i \in [k], \text{~and~}\sum_{i=1}^kA_{ij}\cdot A_{ij}^T=\Id_{n_j}, \forall j \in [m]. 
$$

\begin{rmk}It follows from \cite[Proposition 2.8 and Lemma 3.4]{GarGurOliWig-2015}) that the capacity of any doubly stochastic operator is always one. Thus $\capa(\V,\sigma)=1$ for any geometric datum $(\V,\sigma)$.
\end{rmk}

The next result, proved in \cite{ChiDer-2021}, captures the matrix equations $(\ref{geom-eq-1})$ and $(\ref{geom-eq-2})$ in the context of quiver invariant theory.  

\begin{theorem} (see \cite[Theorem 2]{ChiDer-2021}) \label{quiver-geom-data-thm}  Let $\V \in \rep(Q,\beta)$ be a $\beta$-dimensional representation of $Q$ and let 
$$
\G_{\sigma}(\V):=\{g \in \GL(\beta) \mid (g \cdot \V, \sigma)\text{~is a geometric quiver datum}\}.
$$ 
Then the following statements hold.

\begin{enumerate}[(a)]
\item  $\G_{\sigma}(\V) \neq \emptyset$ if and only if $\V $ is $\sigma$-polystable.

\item Assume that $V$ is $\sigma$-semi-stable. Then there exists a $\sigma$-polystable representation $\widetilde{\V}$ such that $\widetilde{\V} \in \overline{\GL(\beta)_{\sigma}\V}$. Furthermore, for any such $\widetilde{\V}$, the following formula holds:
\begin{equation}\label{capa-formula-eqn}
\capa(\V,\sigma)=\capa(\widetilde{\V},\sigma)=\chi_{\sigma}(g)^2, \forall g \in \G_{\sigma}(\widetilde{\V}).
\end{equation}

\item Assume that $\V$ is such that
$$
\V(a)=\left(
\begin{matrix}
\V_1(a) & X(a)\\
0&\V_2(a)
\end{matrix}\right),
\forall a \in Q_1,
$$
where $\V_i \in \rep(Q,\beta_i)$, $i \in \{1,2\}$, are representations of $Q$, and $X(a) \in \RR^{\beta_1(ha)\times \beta_2(ta)}, \forall a \in Q_1$. If $\sigma \cdot  \ddim \V_1=0$ then
$$
\capa(\V,\sigma)=\capa(\V_1,\sigma)\cdot \capa(\V_2,\sigma).
$$
\end{enumerate}
\end{theorem}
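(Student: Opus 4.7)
My plan is to prove the three parts in sequence, with part (a) as the heart of the argument (a Kempf--Ness statement for the $\GL(\beta)_\sigma$-action), part (b) as a Hilbert--Mumford/invariance corollary, and part (c) as an independent one-parameter degeneration.

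\textbf{Part (a).} I would first recognize the doubly stochastic conditions $(\ref{geom-eq-1})$--$(\ref{geom-eq-2})$ as the moment-map equations, shifted by $\sigma$, for the compact group $K_\beta:=\prod_{x\in Q_0} O(\beta(x))$ acting on $\rep(Q,\beta)$ with the Euclidean inner product $\langle V,W\rangle=\sum_{a\in Q_1}\tr(V(a)W(a)^T)$ rescaled at each arrow by the factor $\sigma_-(hw)$ or $\sigma_+(tv)$ attached to the corresponding vertex. With this identification, the forward implication is King's criterion: a zero of the shifted moment map is a direct sum of $\sigma$-stables, hence $\sigma$-polystable. For the reverse implication I would invoke the real Kempf--Ness theorem: if $V$ is $\sigma$-polystable then its $\GL(\beta)_\sigma$-orbit is closed in the $\sigma$-semistable locus $\rep(Q,\beta)^{ss}_\sigma$, the function $g\mapsto \|g\cdot V\|^2$ is proper and convex along geodesics of the symmetric space $\GL(\beta)_\sigma/K_\beta$, and its minimum is a point at which the shifted moment map vanishes, producing $g\in\G_\sigma(V)$.

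\textbf{Part (b).} Given $\sigma$-semistability of $V$, the closure $\overline{\GL(\beta)_\sigma\cdot V}$ lies in $\rep(Q,\beta)^{ss}_\sigma$, and by general GIT it contains a unique closed $\GL(\beta)_\sigma$-orbit, whose representatives are $\sigma$-polystable; this yields $\widetilde{V}$. Since the integrand in Lemma \ref{lemma-compute-cap} is continuous in $V$ and the infimum is attained uniformly on compacts (the feasibility criterion of Theorem \ref{semi-stab-cap-thm} ensures a positive lower bound in a semistable neighborhood), $\capa(\cdot,\sigma)$ is continuous on $\rep(Q,\beta)^{ss}_\sigma$; combined with Corollary \ref{capa-char-formula-coro} it is constant on $\GL(\beta)_\sigma$-orbit closures, giving $\capa(V,\sigma)=\capa(\widetilde{V},\sigma)$. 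Applying part (a) to $\widetilde{V}$ produces $g\in\G_\sigma(\widetilde{V})$; since $g\cdot\widetilde{V}$ is a geometric datum its Brascamp--Lieb operator is doubly stochastic with capacity $1$, so one more application of Corollary \ref{capa-char-formula-coro} gives $\capa(\widetilde{V},\sigma)=\chi_\sigma(g)^2\cdot\capa(g\cdot\widetilde{V},\sigma)=\chi_\sigma(g)^2$.

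\textbf{Part (c).} I would exhibit a one-parameter subgroup $(g_t)_{t>0}\subseteq\GL(\beta)$ that degenerates $V$ into $V_1\oplus V_2$ while staying inside $\GL(\beta)_\sigma$. Concretely, set $g_t(x)=\mathrm{diag}\bigl(t\,\Id_{\beta_1(x)},\,\Id_{\beta_2(x)}\bigr)$ for every $x\in Q_0$. A direct computation shows $(g_t\cdot V)(a)=\left(\begin{smallmatrix}V_1(a)&tX(a)\\0&V_2(a)\end{smallmatrix}\right)$, so $g_t\cdot V\to V_1\oplus V_2$ as $t\to 0$, and
\[
\chi_\sigma(g_t)=\prod_{x\in Q_0} t^{\sigma(x)\beta_1(x)}=t^{\sigma\cdot\ddim V_1}=t^0=1,
\]
so $g_t\in\GL(\beta)_\sigma$. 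Invariance and upper semicontinuity of the capacity on orbit closures give $\capa(V,\sigma)\geq\capa(V_1\oplus V_2,\sigma)$. The reverse inequality is obtained from Lemma \ref{lemma-compute-cap}: plug block-diagonal $\Sigma_i=\mathrm{diag}(\Sigma_i^1,\Sigma_i^2)$ into the infimum formula for $V$ and bound the resulting $2\times 2$ block matrix by its block-diagonal via the Schur complement identity $\det\bigl(\begin{smallmatrix}P&R\\R^T&Q\end{smallmatrix}\bigr)=\det(Q)\det(P-RQ^{-1}R^T)\leq\det(Q)\det(P)$; this produces precisely the product of the infima defining $\capa(V_1,\sigma)$ and $\capa(V_2,\sigma)$, and $\capa(V_1\oplus V_2,\sigma)=\capa(V_1,\sigma)\cdot\capa(V_2,\sigma)$ follows by independent optimization.

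The main obstacle is part (a): the translation between the algebraic notion of $\sigma$-polystability (condition $(\ref{semi-stab-eqn})$) and the analytic Kempf--Ness statement (closed orbit plus norm-minimizing point) requires care to verify that the shifted moment-map equations on $\rep(Q,\beta)$ really match $(\ref{geom-eq-1})$--$(\ref{geom-eq-2})$ once the $\sigma_\pm$-weighted inner product is set up correctly at each vertex, and that the bipartite/multi-arrow bookkeeping (the sums over $\ar_{ij}$) is absorbed consistently on both sides of the equivalence.
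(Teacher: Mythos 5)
First, a point of order: the paper does not prove Theorem \ref{quiver-geom-data-thm} at all --- it is imported verbatim from \cite[Theorem 2]{ChiDer-2021}, so the only in-paper ``proof'' is the citation. Your overall strategy (real Kempf--Ness via B\"ohm--Lafuente for (a), the unique closed orbit in $\overline{\GL(\beta)_{\sigma}\cdot \V}$ plus capacity invariance for (b), a one-parameter degeneration plus determinant inequalities for (c)) is the same one the cited source follows, and your sketch of (a) is essentially right once the rescaling is fixed: the equations $(\ref{geom-eq-1})$--$(\ref{geom-eq-2})$ become the shifted moment-map equations after rescaling $\V(a)$ by $\sqrt{\sigma_{+}(ta)\,\sigma_{-}(ha)}$ for each arrow, i.e.\ by the \emph{product} of the two weights, not ``$\sigma_{-}(ha)$ or $\sigma_{+}(ta)$.''

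The genuine gaps are in (b) and (c). In (b) you assert that $\capa(\cdot,\sigma)$ is continuous on the semistable locus; but an infimum of continuous functions is only upper semicontinuous, and both upper semicontinuity and the fixed-$\Sigma$ limit argument yield only the inequality $\capa(\widetilde{\V},\sigma)\geq\capa(\V,\sigma)$. The reverse inequality $\capa(\V,\sigma)\geq\capa(\widetilde{\V},\sigma)=\chi_{\sigma}(g)^2$ is the substantive half of the formula, and ``the infimum is attained uniformly on compacts'' does not supply it; one needs either a Kempf--Ness/norm-minimization formula for the capacity, the quantitative lower bound of Garg--Gurvits--Oliveira--Wigderson on the capacity of nearly doubly stochastic operators, or a reduction to part (c) along the filtration induced by a one-parameter subgroup reaching the closed orbit. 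In (c) the semicontinuity inequality is stated in the wrong direction: upper semicontinuity along $g_t\cdot \V\to \V_1\oplus \V_2$ gives $\capa(\V,\sigma)\leq\capa(\V_1\oplus \V_2,\sigma)$, not $\geq$. Moreover the Fischer/Schur-complement step only produces the upper bound $\capa(\V,\sigma)\leq\capa(\V_1,\sigma)\cdot\capa(\V_2,\sigma)$, and even that requires first removing the $X(a)\cdot\Sigma_i^2\cdot X(a)^T$ contamination of the $(1,1)$ block of each $M_j$ (e.g.\ by replacing $\Sigma_i^2$ with $\epsilon\Sigma_i^2$ and using $\sigma\cdot\ddim \V_2=0$ to see that the second factor is unchanged as $\epsilon\to 0$). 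The lower bound $\capa(\V,\sigma)\geq\capa(\V_1,\sigma)\cdot\capa(\V_2,\sigma)$, which must hold against arbitrary non-block-diagonal test matrices $\Sigma_i$ and for which Fischer's inequality points the wrong way in the numerator, is not addressed; this is where the real work of part (c) lies.
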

\bigskip

\section{Extremizable quiver data} 

Let $Q=(Q_0,Q_1,t,h)$ be a bipartite quiver with set of source vertices $Q_0^{+}=\{v_1, \ldots, v_k\}$ and set of sink vertices $Q_0^{-}=\{w_1,\ldots, w_m\}$. Let $\beta \in \ZZ_{>0}^{Q_0}$ be a dimension vector of $Q$ and let $\sigma \in \ZZ^{Q_0}$ be an integral weight such that $\sigma \cdot \beta=0$. Furthermore, assume that $\sigma$ is positive on $Q_0^+$ and negative on $Q_0^-$. Recall that $\sigma_{+}(v_i)=\sigma(v_i), \forall i \in [k]$, and $\sigma_{-}(w_j)=-\sigma(w_j), \forall j \in [m]$.

For a representation $\V \in \rep(Q,\beta)$ and a $k$-tuple $\Sigma=(\Sigma_1,\ldots, \Sigma_k)$ with $Y_i \in \s^{+}_{\beta(v_i)}$, $i \in [k]$, we set
$$
\capa(\V, \sigma; \Sigma):={ \prod_{j=1}^m \det \left( \sum_{i=1}^k \sigma_{+}(v_i)\left( \sum_{a \in \ar_{ij}} \V(a) \cdot \Sigma_i \cdot \V(a)^T  \right) \right)^{\sigma_{-}(w_j)}   \over \prod_{i=1}^k \det(\Sigma_i)^{\sigma_{+}(v_i)}}
$$

\begin{definition}  (compare to \cite[Definition 18]{ChiDer-2021}) Let $\V \in \rep(Q,\beta)$ be a $\beta$-dimensional representation. We say that $(\V, \sigma)$ is \emph{gaussian-extremizable} if there exists a $k$-tuple $\Sigma=(\Sigma_i)_{i=1}^k$ with $\Sigma_i \in \s^{+}_{\beta(v_i)}$, $i \in [k]$, such that 
$$
\capa(\V,\sigma)=\capa(\V,\sigma; \Sigma).
$$

\noindent
We call any such tuple $\Sigma$ a \emph{gaussian extremizer} for $(\V,\sigma)$.
\end{definition}

In what follows, we write $0\neq V \in \rep(Q, \beta)$ to mean that $V$ is not the zero vector of the vector space $\rep(Q, \beta)$, \emph{i.e.}, if $V(a) \in \RR^{\beta(ha)\times \beta(ta)}$ is not the zero matrix for some arrow $a \in Q_1$.

\begin{remark} \label{gauss-ext-pos-cap} If $(V, \sigma)$ is gaussian-extremizable and $0 \neq V \in \rep(Q, \beta)$ then it is immediate to see that $\capa(V, \sigma)>0$.
\end{remark}

\begin{remark} Let $(\V,\sigma)$ be a gaussian-extremizable quiver datum with gaussian extremizer $\Sigma=(\Sigma_i)_{i=1}^k$. We claim that for any $g=(g(x))_{x \in Q_0} \in \GL(\beta)$, $(g \cdot \V, \sigma)$ is gaussian-extremizable with gaussian extremizer
\begin{equation}\label{eqn-g-extremizers}
\widetilde{\Sigma}:=(g(v_i)\cdot \Sigma_i \cdot g(v_i)^{T} )_{i \in [k]}.
\end{equation}
Indeed, it is straightforward to see that 
$$\capa(\V,\sigma)=\capa(V,\sigma; \Sigma)=(\chi_{\sigma}(g))^2\cdot \capa(g \cdot \V,\sigma;\widetilde{\Sigma}).$$ 
Using Corollary \ref{capa-char-formula-coro}, we then get that 
$$\capa(g \cdot \V,\sigma)=(\chi_{\sigma}(g))^{-2}\cdot \capa(\V, \sigma)=\capa(g \cdot \V,\sigma;\widetilde{\Sigma}),$$
and this proves our claim.
\end{remark}

The next result gives necessary and sufficient conditions for $(V, \sigma)$ to be gaussian-extremizable and explains how to construct all the gaussian extremizers from $\G_{\sigma}(\V)$.

\begin{theorem} (compare to \cite[Theorem 20]{ChiDer-2021}) \label{gaussian-extremizers-thm} Let $(\V, \sigma)$ be a quiver datum with $0 \neq V \in \rep(Q,\beta)$. Then the following statements are equivalent:

\begin{enumerate}
\item $\V$ is $\sigma$-polystable;

\item $(\V, \sigma)$ is gaussian-extremizable.
\end{enumerate}

If either $(1)$ or $(2)$ holds then gaussian extremizers for $(\V, \sigma)$ are the $k$-tuples 
$$
(g(v_i)^{-1} \cdot g(v_i)^{-T})_{i\in [k]} \text{~with~}g \in \G_{\sigma}(\V). 
$$
\end{theorem}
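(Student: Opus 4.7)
The plan is to prove both implications by reducing to Theorem~\ref{quiver-geom-data-thm}(a), which identifies $\sigma$-polystability with the non-emptiness of $\G_\sigma(V)$, and to read off the extremizer characterization as a by-product of the construction.

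For $(1)\Rightarrow(2)$: I would pick any $g\in\G_\sigma(V)$, guaranteed by $\sigma$-polystability, so that $(g\cdot V,\sigma)$ is geometric. The identity (\ref{geom-eq-2}) for $g \cdot V$ immediately shows that substituting the identity tuple $(\Id_{\beta(v_i)})_{i\in[k]}$ into the formula for $\capa(g\cdot V,\sigma;\cdot)$ yields $1$, which matches the general fact $\capa(g\cdot V,\sigma)=1$ for geometric data; hence the identity tuple is a gaussian extremizer for $(g\cdot V,\sigma)$. Pulling back by $g^{-1}$ via the transformation rule (\ref{eqn-g-extremizers}), the $k$-tuple $(g(v_i)^{-1}g(v_i)^{-T})_{i\in[k]}$ is a gaussian extremizer for $(V,\sigma)$, which simultaneously proves gaussian-extremizability and furnishes the asserted form of the extremizers coming from $\G_\sigma(V)$.

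For $(2)\Rightarrow(1)$: given a gaussian extremizer $\Sigma=(\Sigma_i)$, Remark~\ref{gauss-ext-pos-cap} gives $\capa(V,\sigma)>0$. Define $h\in\GL(\beta)$ by $h(v_i)=\Sigma_i^{-1/2}$ and $h(w_j)=\Id_{\beta(w_j)}$; by~(\ref{eqn-g-extremizers}), the transformed datum $(h\cdot V,\sigma)$ has the identity tuple as a gaussian extremizer. Replacing $V$ by $h\cdot V$ for the remainder of the argument, I may assume the infimum in Lemma~\ref{lemma-compute-cap} is attained at $\Sigma=\Id$; positivity of the capacity then forces the matrices $M_j:=\sum_{i=1}^k\sigma_+(v_i)\sum_{a\in\ar_{ij}}V(a)V(a)^T$ to be positive definite for every $j\in[m]$, so the objective $\log\capa(V,\sigma;\cdot)$ is smooth near $\Sigma=\Id$. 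Computing the first variation using $\delta\log\det(X)=\tr(X^{-1}\delta X)$, applied to symmetric perturbations of each $\Sigma_i$ separately, yields the stationarity condition
$$\sum_{j=1}^m\sigma_-(w_j)\sum_{a\in\ar_{ij}}V(a)^T M_j^{-1}V(a)=\Id_{\beta(v_i)},\qquad\forall i\in[k].$$
Now I would define $g'\in\GL(\beta)$ by $g'(v_i)=\Id_{\beta(v_i)}$ and $g'(w_j)=M_j^{-1/2}$ and set $W:=g'\cdot V$, so that $W(a)=M_j^{-1/2}V(a)$ for each arrow $a\in\ar_{ij}$. The stationarity condition above is precisely~(\ref{geom-eq-1}) for $W$, while~(\ref{geom-eq-2}) for $W$ reduces to $M_j^{-1/2}M_jM_j^{-1/2}=\Id_{\beta(w_j)}$; hence $(W,\sigma)$ is a geometric quiver datum. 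This shows $g'\in\G_\sigma(V)\neq\emptyset$, and Theorem~\ref{quiver-geom-data-thm}(a) delivers $\sigma$-polystability of (the transformed) $V$. Unwinding the gauge, the composite $g'h$ lies in $\G_\sigma$ of the \emph{original} $V$, and since $(g'h)(v_i)=\Sigma_i^{-1/2}$ one computes $((g'h)(v_i)^{-1}(g'h)(v_i)^{-T})_i=(\Sigma_i)_i$; thus every gaussian extremizer arises from an element of $\G_\sigma(V)$ in the stated form, completing the characterization.

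The main obstacle is the critical-point analysis at the normalized extremizer: verifying that (a) the minimum lies in the interior of the positive-definite cone, which is automatic from the definition of gaussian-extremizability; (b) the objective is differentiable there, which follows from $M_j\succ 0$; and (c) the partial derivatives reduce cleanly to the displayed trace-form identity, after which the explicit choice $g'(w_j)=M_j^{-1/2}$ converts stationarity directly into the geometric identities. The remainder is routine bookkeeping with the gauge action, made transparent by the transformation rule~(\ref{eqn-g-extremizers}) and Theorem~\ref{quiver-geom-data-thm}(a).
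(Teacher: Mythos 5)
Your proposal is correct and follows essentially the same route as the paper: the forward direction uses Theorem~\ref{quiver-geom-data-thm}(a) together with the transformation rule~(\ref{eqn-g-extremizers}), and the converse derives the stationarity identity $\sum_j\sigma_-(w_j)\sum_{a}V(a)^TM_j^{-1}V(a)=\Sigma_i^{-1}$ by the same first-variation computation and then exhibits an element of $\G_\sigma(V)$ with $g(v_i)=\Sigma_i^{-1/2}$, $g(w_j)=M_j^{-1/2}$. The only cosmetic difference is that you gauge-fix $\Sigma$ to the identity before differentiating, whereas the paper differentiates at general $\Sigma$ and conjugates afterward; the resulting group element $g'h$ is the same.
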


\begin{proof} $(\Longrightarrow)$ Assume that $\V$ is $\sigma$-polystable. Then, by Theorem \ref{quiver-geom-data-thm}, there exists $g \in \GL(\beta)$ such that $(g \cdot \V,\sigma)$ is geometric datum; in particular, $(g \cdot V, \sigma)$ is gaussian-extremizable with gaussian extremizer $(\Id_{\beta(v_i)})_{i=1}^k$. This observation combined with $(\ref{eqn-g-extremizers})$ shows that $(\V,\sigma)$ is gaussian-extremizable with gaussian extremizer $(g(v_i)^{-1} \cdot g(v_i)^{-T})_{i \in [k]}$.  

\smallskip
\noindent
$(\Longleftarrow)$ Let us assume now that $(\V, \sigma)$ is gaussian-extremizable and let $\Sigma=(\Sigma_i)_{i=1}^k$ be a gaussian extremizer for $(\V, \sigma)$. For each $j \in [m]$, let 
$$
M_j=\sum_{i \in [k]} \sigma_{+}(v_i) \left(\sum_{a \in \ar_{ij}} \V(a) \cdot \Sigma_i \cdot \V(a)^T  \right) \in \RR^{\beta(w_j)\times \beta(w_j)}.
$$

We know that $\capa(\V, \sigma)>0$ by Remark \ref{gauss-ext-pos-cap} and since
$$ 
\capa(\V,\sigma)={\prod_{j \in [m]} \det(M_j)^{\sigma_{-}(w_j)}  \over \prod_{i \in [k]} \det(\Sigma_i)^{\sigma_{+}(v_i)}},$$
we conclude that each $M_j$ is a positive definite matrix. Define
$$
g(v_i)=\Sigma_i^{-{1 \over 2}}, \forall i \in [k], \text{~and~}g(w_j)=M_j^{-{1 \over 2}}, \forall j \in [m].
$$

\noindent
\textbf{Claim:} If $g:=(g(v_i), g(w_j))_{i \in [k], j \in [m]}$ then $g \in \G_{\sigma}(\V)$.

\begin{proof}[Proof of \textbf{Claim}] 
 We begin by checking that $g \cdot \V$ satisfies equation  $(\ref{geom-eq-2})$. We have that 
\begin{align*}
&\sum_{i\in [k]} \sigma_{+}(v_i) \left( \sum_{a \in \ar_{ij}} (g \cdot \V)(a) \cdot (g \cdot \V)^T(a) \right)=\\
=&\sum_{i \in [k]} \sigma_{+}(v_i) \left( \sum_{a \in \ar_{ij}} g(ha)\cdot \V(a) \cdot g(ta)^{-1}\cdot g(ta)^{-T} \cdot \V(a)^T \cdot  g(ha) \right)\\
=&\sum_{i \in [k]} \sigma_{+}(v_i) \left( \sum_{a \in \ar_{ij}} M_j^{-{1 \over 2}}\cdot \V(a) \cdot \Sigma_i^{{1 \over 2}} \cdot \Sigma_i^{{1 \over 2}} \cdot \V(a)^T \cdot  M_j^{-{1 \over 2}} \right)\\
=&M_j^{-{1 \over 2}}\cdot \left( \sum_{i \in [k]} \sigma_{+}(v_i) \left( \sum_{a \in \ar_{ij}}  \V(a) \cdot \Sigma_i \cdot \V(a)^T \right) \right) \cdot  M_j^{-{1 \over 2}} \\
=&M_j^{-{1 \over 2}}\cdot M_j \cdot  M_j^{-{1 \over 2}} =\Id_{\beta(w_j)}, \forall j \in [m],
\end{align*}
i.e. $g \cdot \V$ satisfies equation $(\ref{geom-eq-2})$. To show that $g \in \G_{\sigma}(\V)$, it remains to check that $g \cdot \V$ satisfies equation $(\ref{geom-eq-1})$, as well. For this, we first show that
\begin{equation}\label{g-extremisers-eqn}
\sum_{j \in [m]} \sigma_{-}(w_j) \left( \sum_{a \in \ar_{ij}} \V(a)^T\cdot M_j^{-1} \cdot \V(a) \right)=\Sigma_i^{-1}, \forall i \in [k].
\end{equation}
To prove it, we take logarithms in the formula for the capacity in Lemma \ref{lemma-compute-cap} and note that $\Sigma$ is a minimizer for the quantity
\begin{equation} \label{eqn-proof-g-min}
\sum_{j \in [m]}\sigma_{-}(w_j)\log \left( \det \left( \sum_{i \in [k]} \sigma_{+}(v_i) \left( \sum_{a \in \ar_{ij}} \V(a) \cdot \Sigma_i \cdot \V(a)^T \right) \right) \right)-\sum_{i \in [k]} \sigma_{+}(v_i) \log (\det(\Sigma_i))
\end{equation}
Now, let us fix an $i \in [k]$ and let $Q_i \in \RR^{\beta(v_i) \times \beta(v_i)}$ be an arbitrary symmetric matrix. Replacing $\Sigma_i$ by $\Sigma_i+\epsilon Q_i$ in $(\ref{eqn-proof-g-min})$, we can see that $\epsilon=0$ is a minimizer for 
$$
A(\epsilon)-B(\epsilon),
$$
where
\begin{align*}
&A(\epsilon)=\sum_{j \in [m]}\sigma_{-}(w_j)\log \left( \det \left( M_j+\epsilon \sigma_{+}(v_i) \left( \sum_{a \in \ar_{ij}} \V(a)\cdot  Q_i \cdot \V(a)^T \right) \right) \right),
\end{align*}
and
\begin{align*}
&B(\epsilon)=\sum_{i' \in [k], i' \neq i} \sigma_{+}(v_{i'}) \log (\det(\Sigma_{i'}))+\sigma_{+}(v_i) \log (\det(\Sigma_i+\epsilon Q_i)).
\end{align*}
Using the matrix differentiation formula ${d \over d\epsilon} \log (\det(X+\epsilon Y))|_{\epsilon=0}=\tr(X^{-1}\cdot Y)$ for any positive definite matrix $X$ and any symmetric matrix $Y$ (see for example \cite[Appendix A.2, Lemma 2.4]{Art-AviKapHai-2020}), we get that
\begin{align*}
0=&{d \over d\epsilon}(A(\epsilon)-B(\epsilon))|_{\epsilon=0}\\
=&\sum_{j \in [m]}\sigma_{-}(w_j)\tr\left(\sigma_{+}(v_i)M_j^{-1} \left( \sum_{a \in \ar_{ij}} \V(a)\cdot  Q_i \cdot \V(a)^T \right)\right)-\sigma_{+}(v_i) \tr(\Sigma_i^{-1}\cdot Q_i)
\\
=&\sigma_{+}(v_i) \left(  \sum_{j \in [m]}\tr\left(\sigma_{-}(w_j)M_j^{-1} \left( \sum_{a \in \ar_{ij}} \V(a)\cdot  Q_i \cdot \V(a)^T \right)\right)-\tr(\Sigma_i^{-1}\cdot Q_i)   \right). 
\end{align*}
Rearranging the factors inside the first trace, we obtain that  
\begin{equation} \label{tr-formula-g-min}
\tr \left( Q_i \cdot \left( \sum_{j \in [m]}\sigma_{-}(w_j) \left( \sum_{a \in \ar_{ij}} \V(a)^T\cdot M_j^{-1}\cdot \V(a) \right)-\Sigma_i^{-1} \right) \right)=0,
\end{equation}
holds for all symmetric matrices $Q_i$. As $\sum_{j \in [m]}\sigma_{-}(w_j) \left( \sum_{a \in \ar_{ij}} \V(a)^T \cdot M_j^{-1}\cdot \V(a) \right)-\Sigma_i^{-1}$ is a symmetric matrix, we can see that $(\ref{tr-formula-g-min})$ yields the desired formula $(\ref{g-extremisers-eqn})$. 

Finally, for each $i \in [k]$, we get via $(\ref{g-extremisers-eqn})$ that 
\begin{align*}
&\sum_{j \in [m]} \sigma_{-}(w_j) \left(  \sum_{a \in \ar_{ij}} (g\cdot \V)(a)^T \cdot (g\cdot V)(a) \right)=\\
=&g(v_i)^{-T}\cdot \left(\sum_{j \in [m]} \sigma_{-}(w_j) \left(  \sum_{a \in \ar_{ij}} \V(a)^T\cdot g(w_j)^T\cdot g(w_j)^ \cdot V(a) \right) \right) \cdot g(v_i)^{-1}\\
=&g(v_i)^{-T}\cdot \left(\sum_{j \in [m]} \sigma_{-}(w_j) \left(  \sum_{a \in \ar_{ij}} \V(a)^T\cdot M_j^{-1} \cdot V(a) \right) \right) \cdot g(v_i)^{-1}\\
=&g(v_i)^{-T}\cdot \Sigma_i^{-1} \cdot g(v_i)^{-1}=\Id_{\beta(v_i)},
\end{align*}
i.e. $g \cdot \V$ satisfies equation $(\ref{geom-eq-2})$, as well. This finishes the proof of our claim.
\end{proof}

It now follows from Theorem \ref{quiver-geom-data-thm} and the claim above that $\V$ is indeed $\sigma$-polystable, and the gaussian extremiser $\Sigma$ is of the form $(g(v_i)^{-1} \cdot g(v_i)^{-T})_{i \in [k]}$ with $g \in \G_{\sigma}(\V)$.
\end{proof}

As an immediate consequence of the arguments in the proof of Theorem \ref{gaussian-extremizers-thm}, we get another characterization of gaussian-extremizers.

\begin{corollary} Let $(\V, \sigma)$ is a quiver datum with $\V \in \rep(Q, \beta)$ and let 
$\mathcal{E}(\V, \sigma)$ be the possibly empty set consisting of all $k$-tuple $\Sigma=(\Sigma_i)_{i=1}^k$ with $\Sigma_i \in \s^{+}_{\beta(v_i)}$, $i \in [k]$, such that
\begin{enumerate}
\item  $M_j:=\sum_{i \in [k]} \sigma_{+}(v_i) \left(\sum_{a \in \ar_{ij}} \V(a) \cdot \Sigma_i \cdot \V(a)^T  \right) \in \RR^{\beta(w_j)\times \beta(w_j)}$ is invertible for every $j \in [m]$; and \\

\item $\Sigma_i^{-1}=\sum_{j \in [m]} \sigma_{-}(w_j) \left( \sum_{a \in \ar_{ij}} \V(a)^T\cdot M_j^{-1} \cdot \V(a) \right)$ for every $i \in [k]$.\\
\end{enumerate}
\noindent
Then $(\V, \sigma)$ is gaussian-extremizable if and only if $\mathcal{E}(\V, \sigma) \neq \emptyset$. Moreover, $\mathcal{E}(\V, \sigma)$ is the set of all gaussian-extremizers for $(\V,\sigma)$.
\end{corollary}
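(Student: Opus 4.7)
The plan is to extract both implications from material already present in the proof of Theorem~\ref{gaussian-extremizers-thm}; the corollary is essentially a repackaging of the first-order optimality analysis carried out there into a self-contained system of matrix equations. I will split the argument into the two inclusions: every gaussian extremizer lies in $\mathcal{E}(\V,\sigma)$, and every element of $\mathcal{E}(\V,\sigma)$ is a gaussian extremizer.

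First, if $(\V,\sigma)$ is gaussian-extremizable with gaussian extremizer $\Sigma=(\Sigma_i)_{i\in[k]}$, I need to show $\Sigma \in \mathcal{E}(\V,\sigma)$. Setting aside the trivial case $V=0$ (where both sides of the claimed equivalence are empty by inspection), Remark~\ref{gauss-ext-pos-cap} gives $\capa(\V,\sigma) > 0$. Lemma~\ref{lemma-compute-cap} expresses this capacity as $\prod_{j}\det(M_j)^{\sigma_{-}(w_j)}/\prod_{i}\det(\Sigma_i)^{\sigma_{+}(v_i)}$; positivity of this ratio, combined with the fact that each $M_j$ is a nonnegative combination of positive semi-definite matrices (hence itself positive semi-definite), forces $\det(M_j) > 0$, so each $M_j \in \s^{+}_{\beta(w_j)}$, giving condition~(1). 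For condition~(2), I would replay the Euler--Lagrange computation from the second half of the proof of Theorem~\ref{gaussian-extremizers-thm}: perturbing $\Sigma_i$ to $\Sigma_i + \epsilon Q_i$ for an arbitrary symmetric $Q_i$, differentiating the log-capacity at $\epsilon=0$ using $\tfrac{d}{d\epsilon}\log\det(X+\epsilon Y)|_{\epsilon=0} = \tr(X^{-1}Y)$, and rearranging the resulting trace identity, yields exactly equation~$(\ref{g-extremisers-eqn})$, which is condition~(2).

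Conversely, given $\Sigma \in \mathcal{E}(\V,\sigma)$, I want to show that $\Sigma$ is a gaussian extremizer. Positive definiteness of each $\Sigma_i$ together with (1) lets me define $g \in \GL(\beta)$ by $g(v_i) = \Sigma_i^{-1/2}$ and $g(w_j) = M_j^{-1/2}$. The ``Claim'' inside the proof of Theorem~\ref{gaussian-extremizers-thm} then verifies, using only the definition of $M_j$ and condition~(2) as inputs, that $g \cdot \V$ satisfies both $(\ref{geom-eq-1})$ and $(\ref{geom-eq-2})$, i.e., $g \in \G_{\sigma}(\V)$. By Theorem~\ref{gaussian-extremizers-thm}, $\V$ is $\sigma$-polystable, $(\V,\sigma)$ is gaussian-extremizable, and its gaussian extremizers are exactly the $k$-tuples $(h(v_i)^{-1}h(v_i)^{-T})_{i\in[k]}$ with $h \in \G_\sigma(\V)$; in particular, taking $h=g$ gives $(\Sigma_i^{1/2}\Sigma_i^{1/2})_{i\in[k]} = \Sigma$, so $\Sigma$ is a gaussian extremizer. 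Combining both inclusions yields the equivalence and the identification $\mathcal{E}(\V,\sigma) = \{\text{gaussian extremizers for }(\V,\sigma)\}$.

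I expect no serious obstacle here, because the corollary is essentially a bookkeeping restatement of work already performed inside the proof of Theorem~\ref{gaussian-extremizers-thm}. The only mild care required is in confirming that $M_j$ is positive definite (rather than merely having nonzero determinant) and in acknowledging the degenerate case $V=0$, where both sides of the equivalence are trivially empty.
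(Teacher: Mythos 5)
Your proposal is correct and follows essentially the same route as the paper, which offers no separate argument but derives the corollary ``as an immediate consequence of the arguments in the proof of Theorem~\ref{gaussian-extremizers-thm}'': the forward inclusion is exactly the Euler--Lagrange/positive-definiteness analysis there, and the converse is exactly the Claim with $g(v_i)=\Sigma_i^{-1/2}$, $g(w_j)=M_j^{-1/2}$ followed by the description of all extremizers. The only (shared) blemish is the $V=0$ edge case, where under the literal definitions the infimum $0$ is attained everywhere while $\mathcal{E}(\V,\sigma)=\emptyset$, so the statement implicitly assumes $0\neq\V$ as in Theorem~\ref{gaussian-extremizers-thm}.
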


Finally, we give necessary and sufficient conditions for a quiver datum to have unique gaussian extremisers. 

\begin{theorem} (see also \cite[Theorem 21]{ChiDer-2021}) \label{uniqueness-g-extremals-thm} Let $(\V, \sigma)$ be a quiver datum with $0 \neq V \in \rep(Q,\beta)$. 
\begin{enumerate}
\item If $\V$ is a Schur $\sigma$-stable representation then $(\V, \sigma)$ has unique gaussian extremizers (up to scaling). 

\item If $(\V,\sigma)$ has unique gaussian extremizers (up to scaling) then $\V$ is $\sigma$-stable. 
\end{enumerate}
\end{theorem}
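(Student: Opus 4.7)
The plan is to leverage Theorem~\ref{gaussian-extremizers-thm}, which identifies the set of gaussian extremizers for $(\V,\sigma)$ with $\{\Sigma(g) := (g(v_i)^{-1} g(v_i)^{-T})_{i \in [k]} \mid g \in \G_{\sigma}(\V)\}$. Part~(1) therefore reduces to showing that this set is a single $\RR^{\times}$-scaling orbit, while part~(2) reduces to producing non-proportional extremizers whenever $\V$ is $\sigma$-polystable but not $\sigma$-stable.

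For part~(1), I would fix $g_0 \in \G_\sigma(\V)$ and, for an arbitrary $g \in \G_\sigma(\V)$, note that both $W := g_0 \cdot \V$ and $g\cdot \V = (g g_0^{-1})\cdot W$ are geometric quiver data inside the same $\GL(\beta)$-orbit. The essential input is the quiver-theoretic form of Ness's uniqueness theorem (cf.~\cite[Theorem~21]{ChiDer-2021}): inside the $\GL(\beta)$-orbit of a polystable representation, the set of geometric data forms a single $K_\beta$-orbit, where $K_\beta := \prod_{x \in Q_0} O(\beta(x))$ is the maximal compact subgroup. The plan for proving this uniqueness is to use the polar decomposition $(g g_0^{-1})(x) = k(x) p(x)$ with $k(x) \in O(\beta(x))$ and $p(x) \in \s^{+}_{\beta(x)}$: the compact factor $k$ preserves the geometric equations~\eqref{geom-eq-1},~\eqref{geom-eq-2} by a direct calculation using $k(x)^{-1} = k(x)^T$, while the positive-definite factor is eliminated using strict geodesic convexity of a Kempf-Ness functional along $\{p^t\}$ modulo $\Stab_{\GL(\beta)}(W)$. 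Once uniqueness is in hand, it follows that $g = k g_0 s^{-1}$ for some $k \in K_\beta$ and $s \in \Stab_{\GL(\beta)}(\V)$. The Schur hypothesis forces $s(v_i) = c^{-1}\Id$ for a common $c \in \RR^{\times}$, and a short computation using $k(v_i)k(v_i)^T = I$ then yields $\Sigma(g) = c^{2}\Sigma(g_0)$, proving uniqueness up to scaling.

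For part~(2), I would argue the contrapositive. Suppose $(\V,\sigma)$ admits a gaussian extremizer but $\V$ is not $\sigma$-stable; by Theorem~\ref{gaussian-extremizers-thm}, $\V$ is $\sigma$-polystable and hence decomposes non-trivially as $\V = W_1 \oplus W_2$ with both $W_i$ non-zero polystable summands satisfying $\sigma \cdot \ddim W_i = 0$. Theorem~\ref{gaussian-extremizers-thm} furnishes extremizers $\Sigma^{(i)}$ for each $(W_i,\sigma)$. Since $\V(a)$ is block-diagonal in this decomposition, $\capa(\V,\sigma;\cdot)$ factors multiplicatively on block-diagonal inputs; combined with the scale-invariance $\capa(W_i,\sigma; c\Sigma^{(i)}) = \capa(W_i,\sigma;\Sigma^{(i)})$, which is a direct consequence of $\sigma \cdot \ddim W_i = 0$, and with the decomposition $\capa(\V,\sigma) = \capa(W_1,\sigma) \cdot \capa(W_2,\sigma)$ from Theorem~\ref{quiver-geom-data-thm}(c), this shows that $c\, \Sigma^{(1)} \oplus d\, \Sigma^{(2)}$ is a gaussian extremizer for $(\V,\sigma)$ for every $c, d > 0$. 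Varying the ratio $c/d$ yields a $1$-parameter family of non-proportional extremizers, contradicting uniqueness.

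The main obstacle is the Ness-type uniqueness statement used in part~(1)---namely, that the set of geometric data inside the $\GL(\beta)$-orbit of a polystable representation is a single $K_\beta$-orbit (equivalently, that $\G_\sigma(\V)$ is a single $K_\beta \cdot \Stab_{\GL(\beta)}(\V)$-double coset). Its rigorous proof in the real setting requires careful work with geodesic convexity on the symmetric space $\GL(\beta)/K_\beta$, of the sort developed in~\cite{ChiDer-2021}. Once this is in place, the Schur hypothesis plays only a structural role: it is precisely what reduces $\Stab_{\GL(\beta)}(\V)$ to the scalar subgroup $\RR^{\times}\cdot \Id$, thereby collapsing the residual ambiguity on $\Sigma(g)$ to a single positive scalar.
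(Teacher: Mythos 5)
Your proposal is correct and follows essentially the same route as the paper: part (1) rests on the Ness/Kempf--Ness uniqueness statement that the geometric data in the orbit of a polystable representation form a single $\mathbf{O}(\beta)\cdot\Stab_{\GL(\beta)}(\V)$ coset (which the paper simply cites from B\"ohm--Lafuente rather than reproving via polar decomposition and geodesic convexity as you outline), followed by the same use of the Schur hypothesis to collapse the stabilizer to $\RR^{\times}\Id$; part (2) is the same independent-rescaling-of-summands argument, phrased with extremizers $c\,\Sigma^{(1)}\oplus d\,\Sigma^{(2)}$ instead of block-diagonal group elements $\left(\begin{smallmatrix}\lambda_1 g_1&0\\0&\lambda_2 g_2\end{smallmatrix}\right)$. (The only slip is cosmetic: $\Sigma(g)=c^{-2}\Sigma(g_0)$ rather than $c^{2}\Sigma(g_0)$, which is immaterial up to scaling.)
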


\begin{proof} According to Kempf-Ness theory (see for example \cite[Theorem 1.1(i)]{Boh-Laf-2017}), if $\V$ is $\sigma$-polystable then for any two $g_1, g_2 \in \G_{\sigma}(\V)$, we have that
$$
g_2 \cdot \V \in \mathbf{O}(\beta)\cdot(g_1 \cdot \V),
$$
where $\mathbf{O}(\beta)$ denotes the subgroup of $\GL(\beta)$ consisting of all tuples of orthogonal matrices. In other words,
\begin{equation} \label{orbit-g-extremals-eqn}
g_2^{-1}\cdot h \cdot g_1 \in \Stab_{\GL(\dd)}(\V)=\End_Q(\V)^{\times} \text{~for some~} h \in \mathbf{O}(\beta).
\end{equation}

\smallskip
\noindent
(1)~ Let us assume that $\V$ is a Schur $\sigma$-stable representation. We know from Theorem \ref{gaussian-extremizers-thm} that $(\V,\sigma)$ is gaussian-extremizable, and let $g_1, g_2 \in \G_{\sigma}(\V)$. Then, by $(\ref{orbit-g-extremals-eqn})$, we can write
$$
g_2=\lambda (h \cdot g_1)
$$ 
for some $h \in \mathbf{O}(\beta)$ and $\lambda \in \RR^{\times}$. Consequently, we get that
$$
g_2(v_i)^{-1} \cdot g_2(v_i)^{-T}=\lambda^{-2} (g_1(v_i)^{-1} \cdot g_1(v_i)^{-T}), \forall i \in [k].
$$
It now follows from Theorem \ref{gaussian-extremizers-thm} that $(\V, \sigma)$ has unique gaussian extremizers, up to scaling.

(2)~ Let us assume that $(V, \sigma)$ has unique gaussian-extremizers. By Theorem \ref{gaussian-extremizers-thm}, we know that $\V$ is $\sigma$-polystable. Assume for a contradiction that $\V=\V_1\oplus \V_2$ with $\V_1 \in \rep(Q,\beta_1)$, $\V_1 \in \rep(Q,\beta_2)$, two proper $\sigma$-polystable subrepresentations of $\V$. Then, by Theorem \ref{gaussian-extremizers-thm}, we know that there exist $g_1 \in \G_{\sigma}(\V_1)$ and $g_2 \in \G_{\sigma}(\V_2)$. Choosing any two scalars $\lambda_1, \lambda_2 \in \RR^{\times}$ with $|\lambda_1| \neq |\lambda_2|$, we get that the gaussian extermizers for $(\V, \sigma)$ corresponding to
$$
\left(
\begin{matrix}
g_1&0\\
0&g_2
\end{matrix}
\right)
\text{~and~}
\left(
\begin{matrix}
\lambda_1 g_1&0\\
0&\lambda_2 g_2
\end{matrix}
\right)
$$
are not a scalar multiple of each other (contradiction). This finishes the proof. 
\end{proof}

We are now ready to prove Theorem \ref{main-thm}.

\begin{proof}[Proof of Theorem \ref{main-thm}] Part (1) follows from Theorem \ref{semi-stab-cap-thm} and Lemma \ref{lemma-compute-cap} applied to the quiver datum $(V_{\A, \cc}, \sigma_{\cc, \tup})$. Parts (2), (3), and (4) follow from Theorem \ref{quiver-geom-data-thm} and Lemma \ref{lemma-compute-cap} applied to $(V_{\A, \cc}, \sigma_{\cc, \tup})$. Finally,  parts (5) and (6) follow from Theorems \ref{gaussian-extremizers-thm} and \ref{uniqueness-g-extremals-thm} applied to $(V_{\A, \cc}, \sigma_{\cc, \tup})$.
\end{proof}

\subsection*{Acknowledgment} The authors would like to thank Peter Pivovarov and Petros Valettas for many useful discussions on the paper. 

C. Chindris is supported by Simons Foundation grant $\# 711639$. H. Derksen is supported by NSF grant DMS $2147769$.


\begin{thebibliography}{GGOW20}\label{biblio-sec}

\bibitem[ACZ22]{AraCouZha-2022}
E.~Aras, T.~A. Courtade, and A.~Zhang, \emph{Equality cases in the
  anantharam-jog-nair inequality}, 2022.

\bibitem[AJN22]{AJN-2022}
V.~Anantharam, V.~Jog, and C.~Nair, \emph{Unifying the brascamp-lieb inequality
  and the entropy power inequality}, IEEE Transactions on Information Theory
  \textbf{68} (2022), no.~12, 7665--7684.

\bibitem[AKS20]{Art-AviKapHai-2020}
S.~{Artstein-Avidan}, H.~{Kaplan}, and M.~{Sharir}, \emph{{On Radial Isotropic
  Position: Theory and Algorithms}}, arXiv e-prints (2020), arXiv:2005.04918.

\bibitem[BL17]{Boh-Laf-2017}
C.~{B{\"o}hm} and R.~A. {Lafuente}, \emph{{Real geometric invariant theory}},
  ArXiv e-prints (2017).

\bibitem[CCE09]{CarCor-2009}
E.~A. Carlen and D.~Cordero-Erausquin, \emph{Subadditivity of the entropy and
  its relation to {B}rascamp-{L}ieb type inequalities}, Geom. Funct. Anal.
  \textbf{19} (2009), no.~2, 373--405. \MR{2545242}

\bibitem[CD21]{ChiDer-2021}
C.~Chindris and H.~Derksen, \emph{{The Capacity of Quiver Representations and
  Brascamp–Lieb Constants}}, International Mathematics Research Notices
  (2021), rnab064.

\bibitem[Cou19]{Cou-2019}
Thomas~A. Courtade, \emph{Transportation proof of an inequality by anantharam,
  jog and nair}, 2019.

\bibitem[GGOW15]{GarGurOliWig-2015}
A.~{Garg}, L.~{Gurvits}, R.~{Oliveira}, and A.~{Wigderson}, \emph{{Operator
  scaling: theory and applications}}, ArXiv e-prints (2015).

\bibitem[GGOW18]{GarGurOliWig-2017}
A.~Garg, L.~Gurvits, R.~Oliveira, and A.~Wigderson, \emph{Algorithmic and
  optimization aspects of {B}rascamp-{L}ieb inequalities, via operator
  scaling}, Geom. Funct. Anal. \textbf{28} (2018), no.~1, 100--145.
  \MR{3777414}

\bibitem[GGOW20]{GarGurOliWig-2020}
\bysame, \emph{Operator scaling: theory and applications}, Found. Comput. Math.
  \textbf{20} (2020), no.~2, 223--290. \MR{4081171}

\bibitem[Rio11]{Rio-2011}
O.~Rioul, \emph{Information theoretic proofs of entropy power inequalities},
  IEEE Transactions on Information Theory \textbf{57} (2011), no.~1, 33--55.

\bibitem[ZF93]{ZamFed-1993}
R.~Zamir and M.~Feder, \emph{A generalization of the entropy power inequality
  with applications}, IEEE Transactions on Information Theory \textbf{39}
  (1993), no.~5, 1723--1728.

\end{thebibliography}
\end{document}